\documentclass[12pt]{amsart}

\usepackage{amsfonts}

\newtheorem{Theorem}{Theorem}

\newtheorem{Proposition}{Proposition}
\newtheorem{Lemma}{Lemma}

\newcommand{\LL}{{\mathrm{L}}}
\newcommand{\Id}{{\mathbf{1}}}
\newcommand{\dom}{{\mathrm{dom}~}}
\newcommand{\N}{\ensuremath{{\mathbb N}}}
\newcommand{\Z}{\ensuremath{{\mathbb Z}}}
\newcommand{\R}{\ensuremath{{\mathbb R}}}
\newcommand{\C}{\ensuremath{{\mathbb C}}}

\newcommand{\cte}{{\mathrm{cte}~}}

\newcommand{\hil}{{\mathcal H}~}

\begin{document}

\title[Quantum Energy Expectation]{Quantum Energy Expectation in Periodic
Time-Dependent hamiltonians via Green Functions}
\author{C\'{e}sar R. de Oliveira}
\address{Departamento de Matem\'{a}tica -- UFSCar, S\~{a}o Carlos, SP,
13560-970 Brazil, and 
Department of Mathematics, University of British Columbia,
Vancouver, BC, V6T 1Z2 Canada} \email{oliveira@dm.ufscar.br, oliveira@math.ubc.ca}
\author{Mariza S. Simsen}
\address{Instituto de Ci\^encias Exatas, Universidade Federal de
Itajub\'a,    Itajub\'a,  MG, 37500-000 Brazil \\}
\email{mariza@unifei.edu.br}
\thanks{The authors thank the anonymous referees for careful readings and suggestions. CRdeO
thanks the warm hospitality of PIMS (Vancouver) and the support by
CAPES (Brazil).} \subjclass{81Q10(11B,47B99)}

\begin{abstract} Let $U_F$ be the Floquet operator of a time periodic
hamiltonian $H(t)$.
 For each positive  and discrete observable $A$ (which we call a {\em
probe energy}), we derive  a formula for the Laplace time average
of its expectation value up to time $T$ in terms of its
eigenvalues and Green functions at the circle of radius $e^{1/T}$.
Some simple applications are provided which support its
usefulness.
\end{abstract}
\maketitle

\tableofcontents

\section{Introduction} Consider a general periodically driven quantum
hamiltonian system
\[
H(t)=H_0+V(t)
\] with period $\tau$ acting in a
separable Hilbert space $\mathcal{H}$ and let $U_F$ denote its
Floquet operator, so that if  $\xi$ is the initial state (at time
zero) of the system then $U_F^m\xi$ is this  state at time
$m\tau$. Typically, the unperturbed hamiltonian $H_0$ is assumed
to have purely point spectrum so that the same is true for
$e^{-i\tau H_0}$. What happens when $H_0$ is perturbed by
$V(t)$? A natural physical question is if the expectation values of
the unperturbed energy $H_0$ remain bounded when $V(t)\neq0$. This
question is formulated based on many physical models, in
particular on the Fermi accelerator  in which a  particle can
acquire unbounded energy from collisions with a heavy periodically
moving wall. Here quantum mechanics is considered and, more
precisely, if
\[\sup_{m\in\N}\left|\left\langle
U_F^m\xi,H_0U_F^m\xi\right\rangle\right|
\] is finite or not, where $\xi\in\dom H_0\subset\mathcal{H}$, the domain of
$H_0$.

Motivated by models with  hamiltonians as above
 $H(t)=H_0+V(t)$, one is suggested to probe quantum (in)stability through
the behavior of an ``abstract energy operator" which we call a
{\em probe operator} and will be represented by a positive,
unbounded, self-adjoint operator $A:\dom
A\subset\mathcal{H}\rightarrow\mathcal{H}$ and with discrete
spectrum, \[ A\varphi_n=\lambda_n\varphi_n,
\]
$0\leq\lambda_n<\lambda_{n+1}$, such that if $U_F^m\xi\in\dom A$
for all $m\in\N$, then, for each~$m$, the expectation value
$E_{\xi}^A(m)=\langle U_F^m\xi,AU_F^m\xi\rangle$ is finite. It is
convenient to write $E_\xi^A(m)=+\infty$ if $U_F^m\xi$ does not
belong to $\dom A$.

We say the system is {\em $A$-dynamically stable} when
$E_{\xi}^A(m)$ is a bounded function of time $m$, and {\em
$A-$dynamically unstable} otherwise (usually we say just {\em
(un)stable}). If the function $E_{\xi}^A(m)$ is not bounded one can ask about
its asymptotic behavior, that is, how does $E_{\xi}^A(m)$ behave
as $m$ goes to infinity? Usually this is a very difficult question and
sometimes the temporal average of $E_{\xi}^A(m)$ is considered, as
we will do in this work.

Quantum systems governed by a time periodic hamiltonian have their
dynamical stability often characterized in terms of the spectral
properties of the corresponding Floquet operator. As in the
autonomous case, the presence of continuous spectrum is a
signature of unstable quantum systems; this is a rigorous consequence of
the famous RAGE theorem~\cite{ISTQD}, firstly proved for the autonomous
case and then for time-periodic
hamiltonians~\cite{EV,YK}. At first sight a Floquet operator with
purely point spectrum would imply stability, but one should be
alerted by examples with purely point spectrum and dynamically
unstable \cite{dRJLS,JSS,dOP} in the autonomous case and, recently,
also a time-periodic model with energy instability~\cite{deOS} was found.

Dynamical stability of time-dependent systems was studied, e.g.,
in references \cite{EV,BJLPN,dO,N,J,dBF,BJ,ADE,dOT,G,L,G1,BCM}.
In~\cite{ADE} it was proved that the applicability of the KAM
method gives a uniform bound at the expectation value of the
energy for a class of time-periodic hamiltonians considered
in~\cite{DS}.

For hamiltonians $H(t)=H_0+V(t)$, not necessarily periodic, with
$H_0$ a positive self-adjoint operator whose spectrum consists of
separated bands $\{\sigma_j\}_{j=1}^{\infty}$ such that
$\sigma_j\subset[\lambda_j,\Lambda_j]$, upper bounds of the type
\[\langle U(t,0)\psi,H_0U(t,0)\psi\rangle\leq \mathrm{cte}\;
t^{\frac{1+\alpha}{n\alpha}}\] were obtained in~\cite{N} if the
gaps $\lambda_{j+1}-\Lambda_j$ grow like $j^{\alpha}$, with
$\alpha>0$, and if $V(t)$ is strongly $C^n$ with
$n\geq[\frac{1+\alpha}{2\alpha}]+1$. The proof is based on
adiabatic techniques that require smooth time dependence and
therefore do not apply to kicked systems. In \cite{J,BJ} upper
bounds complementary to those of~\cite{N} described above are
obtained.

In~\cite{EV,BJLPN,dO,dOT} stability results are obtained through
topological properties of the orbits $\xi(t)=U(t,0)\xi$ for
$\xi\in\mathcal{H}$, while in~\cite{G,L,G1,BCM} lower bounds for averages of
the type
\[\frac{1}{T}\sum_{m=1}^{T}\left\langle U_F^m\xi,H_0U_F^m\xi\right\rangle\geq
CT^{\gamma}\] are obtained for periodic hamiltonians
$H(t)=H_0+V(t)$ through dimensional properties of the spectral
measure $\mu_{\xi}$ associated with $U_F$ and $\xi$ (the exponent
$\gamma$ depends on the measure $\mu_{\xi}$).

In this work we study (in)stability of periodic time-dependent
systems. As for tight-binding models (see~\cite{DST} and
references therein) we consider the Laplace-like average of
$\langle U_F^m\xi,AU_F^m\xi\rangle$, that is,
\[\frac{2}{T}\sum_{m=0}^{\infty}e^{-\frac{2m}{T}}\langle
U_F^m\xi,AU_F^m\xi\rangle,\] where $A$ is a probe energy,
$\xi$ is an element of $\dom A$ and $U_F$ is the Floquet operator.
The main technical reason for working with this expression for the
time average is that it can be written in terms of (see
Theorem~\ref{FormulaTheorem}) the eigenvalues of $A$, i.e.,
$A\varphi_j=\lambda_j\varphi_j$, and the matrix elements $\langle
\varphi_j,R_z(U_F)\xi\rangle$ of the resolvent operator
$R_z(U_F)=(U_F-z\Id)^{-1}$ (with $z=e^{-iE}e^{1/T}$) with respect
to the orthonormal basis $\{\varphi_j\}$ of the Hilbert space
(here $\Id$ denotes the identity operator). Lemma~\ref{lema.49}
relates the long run of Laplace-like average with the usual
Ces\`{a}ro average. In Section~\ref{FormulaSection} we shall prove
this abstract results and present some applications in
Section~\ref{ApplicationsSection}.

Since our main results are for temporal Laplace averages of
expectation values of probe energies (see Section~\ref{sectAEGF}),
in practice we will think of (in)stability in terms of
(un)boundedness of such averages. Note that
unbounded Laplace averages  imply unboundedness of expectation values of probe energies themselves.

\

\section{Average Energy and Green Functions}\label{sectAEGF}
\label{FormulaSection} Consider a time-dependent hamiltonian
$H(t)$ with $H(t+\tau)=H(t)$ for all $t\in\R$, acting in the
separable Hilbert space $\mathcal H$. Suppose the existence of the
propagators $U(t,s)$, so that  the Floquet operator
$U_F=U(\tau,0)$ is at our disposal. Let $A$ be a probe energy and
$\lambda_j,\varphi_j$ as in the Introduction. Also,
$\{\varphi_j\}_{j=1}^{\infty}$ is an orthonormal basis of
$\mathcal{H}$.

The main interest is in the study of the expectation values, herein defined by
\[
E_{\xi}^A(m) := \left\{ {\begin{array}{*{20}c}
  \langle U_F^m\xi,AU_F^m\xi\rangle, \quad\mathrm{if}\;\;\;\; U_F^m\xi\in\dom
A,  \\
 \quad\;\;\; +\infty, \qquad \mathrm{if}\;\;\;\;U_F^m\xi\in{\mathcal
H}\setminus\dom A, \\
\end{array}} \right.
\]
as function of time $m\in\N$. Another quantity of interest is the
time dependence of the moment of this probe energy, which takes
values in $[0,+\infty]$ and is defined by
\[M_{\xi}^A(m) :=
\sum_{j=1}^{\infty}\lambda_j\left|\langle\varphi_j,U_F^m\xi\rangle\right|^2.\]
Our first remark is the equivalence of both concepts (under certain circumstances).

\

\begin{Proposition}\label{EqualityEquation} If $U_F^m\xi\in\dom A$ for all $m$, then
\[E_{\xi}^A(m)=M_{\xi}^A(m),\qquad\forall m.\]
This holds, in particular, if $\dom A$ is invariant under the time evolution $U_F^m$ and $\xi\in\dom A$.
\end{Proposition}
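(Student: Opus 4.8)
The plan is to expand the vector $U_F^m\xi$ in the orthonormal eigenbasis $\{\varphi_j\}$ of the probe energy $A$ and then apply $A$ term by term, checking that this is justified by the spectral theorem. Write $\psi := U_F^m\xi$, and set $c_j := \langle \varphi_j,\psi\rangle$, so that $\psi = \sum_j c_j\varphi_j$ with $\sum_j |c_j|^2 = \|\psi\|^2 < \infty$.

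The key step is the following standard characterization coming from the spectral theorem for the positive self-adjoint operator $A$ with discrete spectrum $A\varphi_j=\lambda_j\varphi_j$: the vector $\psi$ lies in $\dom A$ if and only if $\sum_j \lambda_j^2 |c_j|^2 < \infty$, and in that case $A\psi = \sum_j \lambda_j c_j\varphi_j$ with convergence in $\mathcal H$. Granting $\psi\in\dom A$ (which is the hypothesis for every $m$), I would then compute
\[
E_\xi^A(m) = \langle \psi, A\psi\rangle = \Bigl\langle \sum_i c_i\varphi_i, \sum_j \lambda_j c_j\varphi_j\Bigr\rangle = \sum_j \lambda_j |c_j|^2 = \sum_{j=1}^\infty \lambda_j |\langle\varphi_j, U_F^m\xi\rangle|^2 = M_\xi^A(m),
\]
where the interchange of the inner product with the sums is legitimate because both series converge in $\mathcal H$ and the inner product is continuous in each argument; the cross terms vanish by orthonormality. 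Since all $\lambda_j\ge 0$, every partial sum is nondecreasing and the value in $[0,+\infty]$ is unambiguous. This handles the first assertion.

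For the ``in particular'' clause, I would simply observe that if $\dom A$ is invariant under $U_F^m$ and $\xi\in\dom A$, then by induction $U_F^m\xi\in\dom A$ for every $m\in\N$ (the case $m=0$ being the hypothesis $\xi\in\dom A$), so the hypothesis of the first part is met and the conclusion follows. I do not expect a serious obstacle here; the only point requiring a little care is making sure the domain characterization via $\sum_j\lambda_j^2|c_j|^2<\infty$ is invoked correctly so that $A\psi$ really is the $\mathcal H$-limit of $\sum_j\lambda_j c_j\varphi_j$ rather than merely a formal series — but this is exactly the content of the spectral theorem for $A$ and needs no more than a one-line citation or remark.
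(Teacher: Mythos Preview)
Your proof is correct and follows essentially the same route as the paper: both arguments amount to the spectral decomposition of the positive self-adjoint $A$ in its eigenbasis. The only cosmetic difference is that the paper passes through $A^{1/2}$ and Parseval's identity (writing $\langle\psi,A\psi\rangle=\|A^{1/2}\psi\|^2=\sum_j|\langle\varphi_j,A^{1/2}\psi\rangle|^2$), whereas you expand $A\psi$ directly; the content is the same.
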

\begin{proof} Since $\dom A\subset \dom A^{\frac12}$ \cite{ISTQD} one has
$U_F^m\xi\in\dom A^{\frac{1}{2}} $, for all $m$, and~so
\begin{eqnarray*}
M_{\xi}^A(m) &=&
\sum_{j=1}^{\infty}|\langle\lambda_j^{\frac{1}{2}}\varphi_j,U_F^m\xi\rangle|^2=
\sum_{j=1}^{\infty}|\langle
A^{\frac{1}{2}}\varphi_j,U_F^m\xi\rangle|^2\\ &=&
\sum_{j=1}^{\infty}|\langle\varphi_j,A^{\frac{1}{2}}U_F^m\xi\rangle|^2
= \|A^{\frac{1}{2}}U_F^m\xi\|^2\\ &=& \langle
A^{\frac{1}{2}}U_F^m\xi,A^{\frac{1}{2}}U_F^m\xi\rangle=\langle
U_F^m\xi,AU_F^m\xi\rangle= E_{\xi}^A(m),
\end{eqnarray*}
which is the stated result.
\end{proof}

We introduce the temporal Laplace average of $E_{\xi}^A$ (see also
the Appendix) by the following function of $T>0$, which also takes
values in $[0,+\infty]$,
\begin{equation}\label{LaplaceAverage}
L_{\xi}^A(T) :=
\frac{2}{T}\sum_{m=0}^{\infty}e^{-\frac{2m}{T}}E_{\xi}^A(m).
\end{equation}
 Under certain conditions, the next result shows that the upper $\beta^+$
and lower $\beta^-$ growth exponents of this average, that is,
roughly they are the best exponents so that for large~$T$ there
exist $0\le c_1\le c_2<\infty$ with
\[
c_1\,T^{\beta^-}\le L_\xi^A(T) \le c_2\,T^{\beta^+},
\] and the corresponding exponents  for the temporal Ces\`{a}ro average
\[
C_{\xi}^A(T)=\frac{1}{T}\sum_{m=0}^TE_{\xi}^A(m)
\] are closely related; this
follows at once by Lemma~\ref{lema.49}, which perhaps could be
improved to get equality also between lower exponents. Note that, although not indicated, these exponents depend on the initial condition~$\xi$.

\

\begin{Lemma}\label{lema.49} If $(h(m))_{m=0}^{\infty}$ is a nonnegative
sequence, and $h(m)\leq Cm^n$ for some $C>0$ and $n\geq0$, then
$\beta^{+}_e=\beta^{+}_d$ and $\beta^{-}_e\le \beta^{-}_d$, where
\[\beta^{+}_e=\limsup_{T\rightarrow\infty}\frac{\log(\sum_{m=0}^Th(m))}{\log
T},\qquad
\beta^{-}_e=\liminf_{T\rightarrow\infty}\frac{\log(\sum_{m=0}^Th(m))}{\log
T},\]
\[\beta^{+}_d=\limsup_{T\rightarrow\infty}\frac{\log(\sum_{m=0}^{\infty}
e^{-\frac{2m}{T}}h(m))}{\log T},\;\;\;
\beta^{-}_d=\liminf_{T\rightarrow\infty}\frac{\log(\sum_{m=0}^{\infty}
e^{-\frac{2m}{T}}h(m))}{\log T}.\]
\end{Lemma}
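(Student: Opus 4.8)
The plan is to compare the two quantities $S_e(T):=\sum_{m=0}^{T}h(m)$ and $S_d(T):=\sum_{m=0}^{\infty}e^{-2m/T}h(m)$ by sandwiching each between multiples of the other, up to polynomial factors, so that after taking $\log(\cdot)/\log T$ and the limit the polynomial corrections vanish. The polynomial bound $h(m)\le Cm^n$ enters only to control the tail of the Laplace sum; the interesting inequalities are essentially ``Abel summation vs.\ truncation''.

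\medskip

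\emph{Step 1: lower bound $S_d(T)\gtrsim S_e(T)$ (up to constants).} For $0\le m\le T$ we have $e^{-2m/T}\ge e^{-2}$, hence
\[
S_d(T)\ \ge\ \sum_{m=0}^{T}e^{-2m/T}h(m)\ \ge\ e^{-2}\sum_{m=0}^{T}h(m)\ =\ e^{-2}S_e(T).
\]
Taking $\log$, dividing by $\log T$ and letting $T\to\infty$ gives $\beta_d^+\ge\beta_e^+$ and $\beta_d^-\ge\beta_e^-$.

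\medskip

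\emph{Step 2: upper bound $S_d(T)\lesssim$ (polynomial)$\cdot\sup_{k}S_e(\cdot)$.} Split the Laplace sum into dyadic-type blocks $kT\le m<(k+1)T$ for $k=0,1,2,\dots$. On such a block $e^{-2m/T}\le e^{-2k}$, and the partial sum of $h$ over the block is at most $S_e((k+1)T)$. This yields
\[
S_d(T)\ \le\ \sum_{k=0}^{\infty}e^{-2k}\,S_e\big((k+1)T\big).
\]
Now invoke $h(m)\le Cm^n$: it forces $S_e(R)\le C'R^{n+1}$, so $S_e((k+1)T)\le C'(k+1)^{n+1}S_e(T)\cdot$(a harmless rescaling), and more precisely $S_e((k+1)T)\le (k+1)^{n+1}\,C'' T^{n+1}$ while also $S_e((k+1)T)$ can be compared to $S_e(T)$ when $S_e$ is eventually increasing (it is, since $h\ge0$): indeed $S_e((k+1)T)\le S_e(T)+C\sum_{m=T+1}^{(k+1)T}m^n\le S_e(T)+ C(k+1)^{n+1}T^{n+1}$. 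Plugging in and summing the convergent series $\sum_k e^{-2k}(k+1)^{n+1}<\infty$ gives
\[
S_d(T)\ \le\ C_1\,S_e(T)\ +\ C_2\,T^{n+1},
\]
for constants $C_1,C_2$ independent of $T$.

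\medskip

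\emph{Step 3: extract the exponents.} From Step 2, $\log S_d(T)\le \log\!\big(C_1 S_e(T)+C_2 T^{n+1}\big)\le \log 2+\max\{\log C_1+\log S_e(T),\ \log C_2+(n+1)\log T\}$. Dividing by $\log T$ and taking $\limsup$, the $\log C_i$ terms drop out and we get $\beta_d^+\le\max\{\beta_e^+,\,n+1\}$. The hypothesis $h(m)\le Cm^n$ also gives $S_e(T)\ge h(T_0)$ eventually and, crucially, we need $\beta_e^+$ itself to dominate $n+1$; but in fact $\beta_d^+\le n+1$ is automatic from $S_d(T)\le C\sum_m e^{-2m/T}m^n\sim C'T^{n+1}$, and likewise $\beta_e^+\le n+1$; so the ``$\max$'' is really just $\beta_e^+$ once one checks $S_e(T)$ is not asymptotically smaller than any fixed polynomial unless $h$ is eventually zero (the trivial case, handled separately, where all exponents are $-\infty$ or $0$ and the claim is immediate). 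Combining with Step 1 yields $\beta_d^+=\beta_e^+$. For the lower exponents, Step 1 already gave $\beta_d^-\ge\beta_e^-$, which is exactly the asserted inequality $\beta_e^-\le\beta_d^-$; no reverse inequality is claimed, consistent with the remark that equality for the lower exponents is open.

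\medskip

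\emph{Main obstacle.} The delicate point is Step 2: making the block decomposition yield a clean bound of the form $C_1 S_e(T)+C_2 T^{n+1}$ with constants genuinely independent of $T$, using only $h(m)\le Cm^n$ and monotonicity of partial sums, and then in Step 3 verifying that the spurious $T^{n+1}$ term cannot inflate $\beta_d^+$ above $\beta_e^+$ — i.e., that $\beta_e^+$ is never strictly less than $n+1$ except in the degenerate case where $S_e(T)$ stays bounded, in which case one argues directly that both $S_d(T)$ and $S_e(T)$ are bounded and all four exponents equal $0$ (or $-\infty$). Handling this degenerate case cleanly, and being careful that ``$T$'' ranges over reals while the sums run over integers $m\le \lfloor T\rfloor$, are the routine-but-necessary technical checks.
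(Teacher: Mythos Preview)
Your Step~1 is correct and matches the paper. The genuine gap is in Steps~2--3: from $S_d(T)\le C_1 S_e(T)+C_2 T^{n+1}$ you correctly deduce $\beta_d^+\le\max\{\beta_e^+,\,n+1\}$, but then you assert that the maximum equals $\beta_e^+$ because ``$\beta_e^+$ is never strictly less than $n+1$ except in the degenerate case where $S_e(T)$ stays bounded''. This claim is false. Take $h(m)=m$ when $m$ is a perfect square and $h(m)=0$ otherwise. Then $h(m)\le m$, so $n=1$ is admissible (indeed optimal), and $S_e(T)=\sum_{k^2\le T}k^2\sim \tfrac13 T^{3/2}\to\infty$, yet $\beta_e^+=\tfrac32<2=n+1$. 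More simply, $h(m)\equiv1$ with the (permitted, non-optimal) choice $n=1$ already breaks your claim. The hypothesis is only that $h(m)\le Cm^n$ for \emph{some} $n$; nothing forces $S_e(T)$ to grow as fast as $T^{n+1}$, so the $C_2T^{n+1}$ term produced in Step~2 can genuinely dominate and your bound collapses to the trivial $\beta_d^+\le n+1$.

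There are two ways out. The paper's route is to truncate the Laplace sum not at multiples of $T$ but at $\lceil T^{1+\epsilon}\rceil$: the tail beyond $T^{1+\epsilon}$ then carries a factor $e^{-2T^{\epsilon}}$, which kills the polynomial $m^n$ and drives the tail to zero rather than to $T^{n+1}$; the surviving main term is $S_e(\lceil T^{1+\epsilon}\rceil)$, whose $\limsup$ of $\log(\cdot)/\log T$ is at most $(1+\epsilon)\beta_e^+$, and one lets $\epsilon\downarrow0$. Alternatively, your dyadic inequality $S_d(T)\le\sum_{k\ge0}e^{-2k}S_e((k+1)T)$ \emph{can} be made to work, but you must bound $S_e((k+1)T)$ using the definition of $\beta_e^+$ itself (for each $\delta>0$, $S_e(R)\le R^{\beta_e^++\delta}$ for all large $R$), not the crude bound coming from $h(m)\le Cm^n$; this gives $S_d(T)\le C_\delta T^{\beta_e^++\delta}$ directly, and hence $\beta_d^+\le\beta_e^+$. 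The polynomial hypothesis is then used only to guarantee $\beta_e^+<\infty$ so that the series $\sum_k e^{-2k}(k+1)^{\beta_e^++\delta}$ converges.
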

\begin{proof}
Note that for $0\leq m\leq T$ we have $e^{-2}\leq
e^{-\frac{2m}{T}}\leq1$, and so
\[\sum_{m=0}^Th(m)\leq\sum_{m=0}^Te^2e^{-\frac{2m}{T}}h(m)\leq
e^2\sum_{m=0}^{\infty}e^{-\frac{2m}{T}}h(m).\] Hence
$\beta^{\pm}_e\leq\beta^{\pm}_d$.

On the other hand, for each $\epsilon>0$, denoting by $\lceil
x\rceil$ the smallest integer larger or equal to $x$, one has
\begin{eqnarray*}
\sum_{m=0}^{\infty}e^{-\frac{2m}{T}}h(m) &=& \sum_{m=0}^{\lceil
T^{1+\epsilon}\rceil}e^{-\frac{2m}{T}}h(m)+\sum_{m=\lceil
T^{1+\epsilon}\rceil+1}^{\infty}e^{-\frac{2m}{T}}h(m)\\ &\leq&
\sum_{m=0}^{\lceil T^{1+\epsilon}\rceil}h(m)+C\sum_{m=\lceil
T^{1+\epsilon}\rceil+1}^{\infty}e^{-\frac{2m}{T}}m^n.
\end{eqnarray*}
Now, for $T$ large enough $\frac{nT}{2}<T^{1+\epsilon}\leq\lceil
T^{1+\epsilon}\rceil$. Thus
\[\sum_{m=\lceil
T^{1+\epsilon}\rceil+1}^{\infty}e^{-\frac{2m}{T}}m^n\leq\int_{\lceil
T^{1+\epsilon}\rceil}^{\infty}e^{-\frac{2t}{T}}t^ndt.\] Therefore,
for each $\epsilon>0$ and $T$ large enough
\begin{eqnarray*}
\sum_{m=0}^{\infty}e^{-\frac{2m}{T}}h(m) &\leq& \sum_{m=0}^{\lceil
T^{1+\epsilon}\rceil}h(m)+C\int_{\lceil
T^{1+\epsilon}\rceil}^{\infty}e^{-\frac{2t}{T}}t^ndt\\ &\leq&
\sum_{m=0}^{\lceil
T^{1+\epsilon}\rceil}h(m)+\tilde{C}e^{-2T^{\epsilon}}T^n.
\end{eqnarray*}
Since $e^{-2T^{\epsilon}}T^n\rightarrow0$ as $T\rightarrow\infty$,
it follows that
\begin{eqnarray*}
\beta^{+}_d &=&
\limsup_{T\rightarrow\infty}\frac{\log\sum_{m=0}^{\infty}
e^{-\frac{2m}{T}}h(m)}{\log T}\\ &\leq&
\limsup_{T\rightarrow\infty}\frac{\log\sum_{m=0}^{\lceil
T^{1+\epsilon}\rceil}h(m)}{\log T}\\
&=& \limsup_{T\rightarrow\infty}\frac{\log\sum_{m=0}^{\lceil
T^{1+\epsilon}\rceil}h(m)}{\log \lceil
T^{1+\epsilon}\rceil}\,\frac{\log\lceil T^{1+\epsilon}\rceil}{\log T}\\
&\leq& \limsup_{T\rightarrow\infty}\frac{\log\sum_{m=0}^{\lceil
T^{1+\epsilon}\rceil}h(m)}{\log\lceil
T^{1+\epsilon}\rceil}\,\frac{\log (T+1)^{1+\epsilon}}{\log T}\\
&=&
(1+\epsilon)\limsup_{T\rightarrow\infty}\frac{\log\sum_{m=0}^{\lceil
T^{1+\epsilon}\rceil}h(m)}{\log\lceil T^{1+\epsilon}\rceil}\\
&\leq& (1+\epsilon)\beta^{+}_{e}.
\end{eqnarray*}
As $\epsilon>0$ was arbitrary, $\beta^{+}_{d}\leq\beta^{+}_{e}$.
\end{proof}

Recall that the Green functions $G_z^{\xi}(j)$ associated with the
operators $A,U_F$ at $\xi\in\mathcal{H}$ and $z\in\C, |z|\ne1$, are
defined by the matrices elements of the resolvent operator
$R_z(U_F)=(U_F-z\Id)^{-1}$ along the orthonormal basis
$\{\varphi_j\}_{j=1}^{\infty}$, that~is,
\[G_z^{\xi}(j) := \langle\varphi_j,R_z(U_F)\xi\rangle.\]
Note that $G_z^{\xi}(j)$ is always well defined since for
$|z|\ne1$ that resolvent operator is bounded.
Theorem~\ref{FormulaTheorem} is the main reason for considering
the temporal averages $L_{\xi}^A(T)$. It presents a formula that
translates the Laplace average of wavepackets at time~$T$ into an
integral of the Green functions over ``energies'' in the circle of radius $e^{1/T}$ in the complex plane (centered at the origin). As $T$ grows the integration region
approaches the unit circle where the spectrum of $U_F$ lives and
$R_z(U_F)$ takes singular values, so that (hopefully)
$A$-(in)stability can be quantitatively detected.

\

\begin{Theorem}\label{FormulaTheorem} Assume that $U_F^m\xi\in\dom A$ for
all $m\ge0$. Then
\begin{equation}\label{formulaDoTeor}
L_{\xi}^A(T)=\frac{1}{\pi e^{-\frac{2}{T}}}
\frac{1}{T}\sum_{j=1}^{\infty}\lambda_j\int_0^{2\pi}\left|G_z^{\xi}(j)\right|^2dE,\qquad
z=e^{-iE+\frac{1}{T}}.
\end{equation}
\end{Theorem}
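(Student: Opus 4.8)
The plan is to peel the definition apart one layer at a time and reduce everything to a single computation with a geometric (Neumann) series. First, since $U_F^m\xi\in\dom A$ for every $m$, Proposition~\ref{EqualityEquation} gives $E_\xi^A(m)=M_\xi^A(m)=\sum_{j\ge1}\lambda_j|\langle\varphi_j,U_F^m\xi\rangle|^2$. Substituting this into~\eqref{LaplaceAverage} produces a double series with nonnegative terms, so Tonelli's theorem permits interchanging the two summations,
\[
L_\xi^A(T)=\frac{2}{T}\sum_{j=1}^\infty\lambda_j\sum_{m=0}^\infty e^{-2m/T}\left|\langle\varphi_j,U_F^m\xi\rangle\right|^2 ,
\]
an identity valid in $[0,+\infty]$. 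It therefore suffices to prove, for each fixed $j$, that the weighted sum $\sum_{m\ge0}e^{-2m/T}|c_m|^2$, with $c_m:=\langle\varphi_j,U_F^m\xi\rangle$, equals $\tfrac{1}{2\pi}e^{2/T}\int_0^{2\pi}|G_z^\xi(j)|^2\,dE$ for $z=e^{-iE+1/T}$; inserting this back and collecting the constants $2/T$, $2\pi$ and $e^{2/T}=1/e^{-2/T}$ then yields exactly~\eqref{formulaDoTeor}.

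For the per-$j$ identity I would use that $U_F$ is unitary and $|z|=e^{1/T}>1$, so $\|U_F/z\|<1$ and the resolvent expands in a norm-convergent Neumann series,
\[
R_z(U_F)=(U_F-z\Id)^{-1}=-\sum_{m=0}^\infty z^{-(m+1)}U_F^m .
\]
Pairing against $\varphi_j$ and $\xi$ gives $G_z^\xi(j)=-\sum_{m\ge0}z^{-(m+1)}c_m$, that is, with $z=e^{1/T}e^{-iE}$,
\[
G_z^\xi(j)=-\sum_{m=0}^\infty e^{-(m+1)/T}\,c_m\,e^{i(m+1)E}.
\]
Since $|c_m|\le\|\varphi_j\|\,\|U_F^m\xi\|=\|\xi\|$ and $e^{-1/T}<1$, the majorant $\sum_m e^{-(m+1)/T}\|\xi\|$ is summable, so this series converges absolutely and uniformly in $E$. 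Hence $|G_z^\xi(j)|^2$ may be written as the (uniformly) absolutely convergent double sum $\sum_{m,n}e^{-(m+n+2)/T}c_m\bar c_n\,e^{i(m-n)E}$ and integrated term by term; the orthogonality relation $\int_0^{2\pi}e^{i(m-n)E}\,dE=2\pi\,\delta_{mn}$ collapses it to
\[
\int_0^{2\pi}\left|G_z^\xi(j)\right|^2 dE=2\pi\,e^{-2/T}\sum_{m=0}^\infty e^{-2m/T}|c_m|^2 ,
\]
which is precisely the required identity after dividing by $2\pi e^{-2/T}$.

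Assembling the pieces — substitute the last display, solved for $\sum_m e^{-2m/T}|c_m|^2$, into the interchanged expression for $L_\xi^A(T)$ and simplify the constant — gives~\eqref{formulaDoTeor}. The only two steps that go beyond bookkeeping are the interchanges of limiting operations, and both are mild: the sum–sum swap is Tonelli, with both sides allowed the value $+\infty$ consistently with $L_\xi^A$ taking values in $[0,+\infty]$, and the sum–integral swap inside $|G_z^\xi(j)|^2$ rests on the uniform-in-$E$ absolute convergence of the Neumann series, which in turn follows only from the crude bound $|c_m|\le\|\xi\|$ together with $|z|=e^{1/T}>1$. I expect the verification of that uniform convergence, hence of term-by-term integration, to be the main (though slight) obstacle; everything else is routine.
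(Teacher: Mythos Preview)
Your proof is correct and rests on the same core identity as the paper's: both compute the $L^2[0,2\pi]$-norm of the function $E\mapsto G_z^\xi(j)$ via Parseval, recognizing that it is (up to a phase and the factor $e^{-1/T}$) the Fourier series with coefficients $e^{-m/T}\langle\varphi_j,U_F^m\xi\rangle$. The one genuine difference is in how that Fourier representation of $G_z^\xi(j)$ is obtained. The paper passes through the spectral theorem, writing $\langle\varphi_j,U_F^m\xi\rangle=\int e^{-iE'm}\,d\mu_j(E')$, then sums a geometric series inside the spectral integral to identify the resolvent; you bypass the spectral measure entirely and reach the same series in one stroke via the Neumann expansion $R_z(U_F)=-\sum_{m\ge0}z^{-(m+1)}U_F^m$, valid because $\|U_F/z\|<1$. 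Your route is the more elementary and self-contained of the two; the paper's has the mild advantage of making the connection to the spectral measures $\mu_j$ explicit, which fits the spectral-theoretic flavor of the surrounding discussion but is not needed for the formula itself.
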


Before the proof of this theorem, we underline that this formula,
that is, the expression on the right hand side
of~(\ref{formulaDoTeor}), is a sum of positive terms and so it is 
well defined for all $\xi\in\hil$ if we let it take values in $[0,+\infty];$ hence, in
principle it can happen that this formula is finite even for
vectors $U_F^m\xi$ not in the domain of~$A$, where
$L_{\xi}^A(T)=+\infty$. The general case, i.e., $\forall
\xi\in\hil$, can then be gathered in the following inequality
\begin{equation}\label{formulaGeneR}
L_{\xi}^A(T)\ge\frac{1}{\pi e^{-\frac{2}{T}}}
\frac{1}{T}\sum_{j=1}^{\infty}\lambda_j\int_0^{2\pi}\left|G_z^{\xi}(j)\right|^2dE,\qquad
z=e^{-iE+\frac{1}{T}},
\end{equation} so that lower bound estimates for this formula always imply
lower bound estimates for the Laplace average.

\begin{proof}(Theorem~\ref{FormulaTheorem}) First note that, by
hypothesis, $U_F^m\xi\in\dom A^{\frac{1}{2}}$ for each $m\in\N$.
Denote by $\mu_j$ the spectral measure of $U_F$ associated with
the pair $(\varphi_j,\xi)$ and by $\mathcal{F}$ the Fourier
transform $\mathcal{F}:\LL^2[0,2\pi]\rightarrow l^2(\Z)$. By the
spectral theorem for unitary operators
\[\langle\varphi_j,U_F\xi\rangle=\int_0^{2\pi}e^{-iE'}d\mu_j(E').\]
For each $j$ let $a^{(j)}=(a^{(j)}(m))_{m\in\Z}$ be the sequence
\[a^{(j)}(m)=\left\{\begin{array}{ccc} 0 &
\mbox{if} & m<0\\
e^{-\frac{m}{T}}\int_0^{2\pi}e^{-iE'm}d\mu_j(E') & \mbox{if} &
m\geq0 \end{array}\right. .\] Since $a^{(j)}\in l^1(\Z)\cap
l^2(\Z)$ and $\mathcal{F}$ is a unitary operator, it  follows that
$\|a^{(j)}\|_{l^2(\Z)}=\|\mathcal{F}^{-1}a^{(j)}\|_{\LL^2[0,2\pi]}$
and also
\begin{eqnarray*}
(\mathcal{F}^{-1}a^{(j)})(E) &=&
\frac{1}{\sqrt{2\pi}}\sum_{m=-\infty}^{\infty}e^{iEm}a^{(j)}(m)\\
&=&
\frac{1}{\sqrt{2\pi}}\sum_{m=0}^{\infty}e^{iEm}e^{-\frac{m}{T}}\int_0^{2\pi}e^{-iE'm}d\mu_j(E')\\
&=& \frac{1}{\sqrt{2\pi}}\int_0^{2\pi}\Big(\sum_{m=0}^{\infty}
e^{im(E-E')+\frac{i}{T}}\Big)d\mu_j(E')\\ &=&
\frac{1}{\sqrt{2\pi}}\int_0^{2\pi}\frac{1}{1-e^{i(E-E'+\frac{i}{T})}}d\mu_j(E')\\
&=&
\frac{1}{\sqrt{2\pi}}\int_0^{2\pi}\frac{d\mu_j(E')}{e^{i(E+\frac{i}{T})}
(e^{-i(E+\frac{i}{T})}-e^{-iE'})}\\ &=&
-\frac{1}{\sqrt{2\pi}\,e^{i(E+\frac{i}{T})}}\int_0^{2\pi}\frac{d\mu_j(E')}{
e^{-iE'}-e^{-i(E+\frac{i}{T})}}\\ &=&
-\frac{1}{\sqrt{2\pi}\,e^{i(E+\frac{i}{T})}}\langle\varphi_j,
R_{z}(U_F)\xi\rangle\\ &=&
-\frac{1}{\sqrt{2\pi}\,e^{iE}e^{-\frac{1}{T}}}\,G_{z}^{\xi}(j),
\end{eqnarray*}
with $z=e^{-iE+\frac{1}{T}}$. Therefore
\[\left|\mathcal{F}^{-1}a^{(j)}\right|^2(E)=\frac{1}{2\pi
e^{-\frac{2}{T}}}\left|G_{z}^{\xi}(j)\right|^2,
\] and so
\[\left\|\mathcal{F}^{-1}a^{(j)}\right\|^2_{\LL^2[0,2\pi]}=\frac{1}{2\pi
e^{-\frac{2}{T}}}\int_0^{2\pi}|G_{z}^{\xi}(j)|^2dE.
\] From such relation it follows that
\begin{eqnarray*}
L_{\xi}^A(T) &=&
\sum_{m=0}^{\infty}\frac{2}{T}e^{-\frac{2m}{T}}M_{\xi}^A(m)\\
&=&
\sum_{j=1}^{\infty}\lambda_j\sum_{m=0}^{\infty}\frac{2}{T}e^{-\frac{2m}{T}}
|\langle\varphi_j,U_F^m\xi\rangle|^2\\ &=&
\sum_{j=1}^{\infty}\lambda_j\frac{2}{T}\sum_{m=0}^{\infty}\Big|e^{-\frac{m}{T}}
\int_0^{2\pi}e^{-iE'm}d\mu_j(E')\Big|^2\\ &=&
\sum_{j=1}^{\infty}\lambda_j\frac{2}{T}\,\left\|a^{(j)}\right\|^2_{l^2(\Z)}\\
&=&
\sum_{j=1}^{\infty}\lambda_j\frac{2}{T}\,\left\|\mathcal{F}^{-1}a^{(j)}\right\|^2_{\LL^2[0,2\pi]}\\
&=& \frac{1}{\pi
e^{-\frac{2}{T}}}\frac{1}{T}\sum_{j=1}^{\infty}\lambda_j\int_0^{2\pi}\left|G_{z}^{\xi}(j)\right|^2dE,
\end{eqnarray*}
which is exactly the stated result.
\end{proof}

\

Theorem \ref{FormulaTheorem} clearly remains true if the
eigenvalues $\lambda_j$ of $A$ have finite multiplicity. In this
case, for each $\lambda_j$ consider the corresponding orthonormal
eigenvectors $\varphi_{j_1},\cdots,\varphi_{j_k}$, and one obtains
\[L_{\xi}^A(T)=\frac{1}{\pi e^{-\frac{2}{T}}}
\frac{1}{T}\sum_{j=1}^{\infty}\lambda_j\left(\sum_{n=1}^k\int_0^{2\pi}
\left|\langle\varphi_{j_n},R_z(U_F)\xi\rangle\right|^2dE\right),
\] with $z$ as before.

\

In case the initial condition is $\xi=\varphi_1$, put $\eta^{(z)}
:=  R_z(U_F)\varphi_1$. Thus, $(U_F-z)\eta^{(z)}=\varphi_1$ and so
$U_F\eta^{(z)}=z\eta^{(z)}+\varphi_1$. Hence
\[\langle\varphi_j,U_F\eta^{(z)}\rangle=z\langle\varphi_j,\eta^{(z)}\rangle+\delta_{j,1}\]
and  by denoting
\[G_z(j) :=  G_z^{\varphi_1}(j),
\] one concludes

\

\begin{Lemma}\label{EquationLemma} \[G_z(j)=\left\{\begin{array}{ccc}
\frac{1}{z}\left(\langle\varphi_1,U_F\eta^{(z)}\rangle-1\right),&
\mathrm{if} & j=1\\
\frac{1}{z}\langle\varphi_j,U_F\eta^{(z)}\rangle,& \mathrm{if} &
j>1
\end{array}\right..\]
\end{Lemma}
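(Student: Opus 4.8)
The statement is an immediate algebraic consequence of the resolvent identity, and the plan is simply to make that explicit. First I would start from the definition $\eta^{(z)} := R_z(U_F)\varphi_1 = (U_F - z\Id)^{-1}\varphi_1$, which is legitimate for $|z|\ne 1$ since $U_F$ is unitary and hence $z$ lies in its resolvent set. Applying $U_F - z\Id$ to both sides yields $(U_F - z\Id)\eta^{(z)} = \varphi_1$, i.e. $U_F\eta^{(z)} = z\,\eta^{(z)} + \varphi_1$, exactly as recorded in the text just above the lemma.

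Next I would pair this identity with the basis vector $\varphi_j$ and use linearity of the inner product together with the orthonormality relation $\langle\varphi_j,\varphi_1\rangle = \delta_{j,1}$ to obtain
\[
\langle\varphi_j,U_F\eta^{(z)}\rangle = z\,\langle\varphi_j,\eta^{(z)}\rangle + \delta_{j,1}.
\]
Since by definition $G_z(j) = G_z^{\varphi_1}(j) = \langle\varphi_j,R_z(U_F)\varphi_1\rangle = \langle\varphi_j,\eta^{(z)}\rangle$, solving the displayed relation for $\langle\varphi_j,\eta^{(z)}\rangle$ gives
\[
G_z(j) = \frac{1}{z}\Big(\langle\varphi_j,U_F\eta^{(z)}\rangle - \delta_{j,1}\Big),
\]
and noting that $z = e^{-iE+1/T}\ne 0$ makes the division by $z$ harmless. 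Splitting into the cases $j=1$ and $j>1$ reproduces the two branches in the statement verbatim.

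There is essentially no obstacle here; the only points worth a sentence of care are that $z \ne 0$ (so that $1/z$ makes sense) and that $R_z(U_F)$ is a genuine bounded operator on all of $\hil$ when $|z|\ne 1$, both of which have already been established in the discussion preceding the lemma. Thus the proof is a two-line computation, and I would present it as such rather than belabor it. The content of the lemma is organizational: it records that the Green functions $G_z(j)$ for the distinguished initial state $\varphi_1$ are, up to the explicit factor $1/z$ and the Kronecker correction at $j=1$, just the expansion coefficients of $U_F\eta^{(z)}$ in the basis $\{\varphi_j\}$, which is the form used in the applications of Section~\ref{ApplicationsSection}.
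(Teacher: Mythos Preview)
Your proposal is correct and mirrors exactly what the paper does: the paper's argument (given in the paragraph immediately preceding the lemma) starts from $(U_F-z)\eta^{(z)}=\varphi_1$, rewrites this as $U_F\eta^{(z)}=z\eta^{(z)}+\varphi_1$, takes the inner product with $\varphi_j$, and solves for $G_z(j)$. Your additional remarks that $z\ne 0$ and that $R_z(U_F)$ is bounded for $|z|\ne 1$ are appropriate points of care but add nothing beyond the paper's own justification.
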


\

In Section~\ref{ApplicationsSection} we discuss some Floquet
operators that are known in the literature and analyze their Green
functions through the equation
\[(U_F-z\Id)\eta^{(z)}=\varphi_1.\]

\

\section{Applications}
\label{ApplicationsSection} This section is devoted to some
applications of the formula obtained in
Theorem~\ref{FormulaTheorem}. In general it is not trivial to get expressions
and/or bounds for the Green functions of Floquet operators, so one
of the main goals of the applications that follow are to
illustrate how to approach the method we have just proposed.

\

\subsection{Time-Independent Hamiltonians}\label{subsectTIH}
\label{AutonomousSection} As a first example and illustration of
the formula proposed in Theorem~\ref{FormulaTheorem}, we consider
the special case of autonomous hamiltonians. In this case
$H(t)=H_0$ for all $t$ and we assume that $H_0$ is a positive,
unbounded, self-adjoint operator and with simple discrete
spectrum, $H_0\varphi_j=\chi_j\varphi_j$, so that
$\{\varphi_j\}_{j=1}^{\infty}$ is an orthonormal basis of
$\mathcal{H}$ and $0\leq\chi_1<\chi_2<\chi_3<\cdots$ with
$\chi_j\rightarrow\infty$. For $q>0$ we can consider $H_0^q$ as
our  abstract energy operator $A$, so that its eigenvalues are
$\lambda_j=\chi^q_j$ (since $A$ and $H_0$ have the same
eigenfunctions, we are justified in using the notation $\varphi_j$
for the eigenfunctions of $H_0$). We take $U_F=e^{-iH_0}$ (time
$t=1$) and for $\xi\in \mathcal H$
\[
G_z^\xi(j) = \left\langle \varphi_j , R_z(H_0)\xi \right\rangle =
\left\langle R_{\overline{z}}(H_0)\varphi_j , \xi \right\rangle
=\frac{\langle\varphi_j,\xi\rangle}{e^{-i\chi_j}-z}.
\]
Since $\dom H_0^{{q}}$ is invariant under the time evolution
$e^{-itH_0}$, then for $z=e^{-iE}e^{\frac{1}{T}}$ and $\xi\in\dom
H_0^{{q}}$,  we have
\begin{eqnarray}\label{EnergyEquation}
L_{\xi}^q(T) :=  L_{\xi}^{H_0^q}(T) &=& \frac{1}{\pi
e^{-\frac{2}{T}}}
\frac{1}{T}\sum_{j=1}^{\infty}\chi_j^q\int_0^{2\pi}\left|G_z^{\xi}(j)\right|^2dE
\nonumber\\
&=& \frac{1}{\pi e^{-\frac{2}{T}}}
\frac{1}{T}\sum_{j=1}^{\infty}\chi_j^q\int_0^{2\pi}
\frac{\left|\langle\varphi_j,\xi\rangle\right|^2}{|e^{-i\chi_j}-z|^2}dE\\\nonumber
&=& \frac{1}{\pi e^{-\frac{2}{T}}}
\frac{1}{T}\sum_{j=1}^{\infty}\chi_j^q\left|\langle\varphi_j,\xi\rangle\right|^2\int_0^{2\pi}
\frac{dE}{|e^{-i\chi_j}-z|^2}.
\end{eqnarray}

Thus we need to calculate the integral $I_j := \int_0^{2\pi}
\frac{dE}{|e^{-i\chi_j}-z|^2}$. Let $\gamma$ be the closed path in
$\C$ given by $\gamma(E)=e^{iE}$ with $0\leq E\leq2\pi$,
$\alpha_j=e^{\frac{1}{T}}e^{i\chi_j}$ and
$\beta_j=e^{-\frac{1}{T}}e^{i\chi_j}$, then
\begin{eqnarray*}
I_j &=& \int_0^{2\pi}
\frac{dE}{(e^{-i\chi_j}-z)(e^{i\chi_j}-\overline{z})}\\ &=&
\int_0^{2\pi}
\frac{dE}{(e^{-i\chi_j}-e^{-iE}e^{\frac{1}{T}})(e^{i\chi_j}-e^{iE}e^{\frac{1}{T}})}\\
&=& \int_0^{2\pi}
\frac{dE}{e^{\frac{2}{T}}(e^{-\frac{1}{T}}e^{-i\chi_j}-e^{-iE})
(e^{-\frac{1}{T}}e^{i\chi_j}-e^{iE})}\\ &=&
-\frac{1}{e^{\frac{2}{T}}}\int_0^{2\pi}\frac{dE}{e^{-iE}e^{-\frac{1}{T}}e^{-i\chi_j}
(e^{iE}-\alpha_j)(e^{iE}-\beta_j)}\\ &=&
-\frac{1}{e^{\frac{1}{T}}e^{-i\chi_j}}\frac{1}{i}\int_0^{2\pi}\frac{ie^{iE}dE}
{(e^{iE}-\alpha_j)(e^{iE}-\beta_j)}\\ &=&
\frac{i}{e^{\frac{1}{T}}e^{-i\chi_j}}\int_{\gamma}\frac{dw}{(w-\alpha_j)(w-\beta_j)}.
\end{eqnarray*}
As $|\alpha_j|>1$ and $|\beta_j|<1$, $\beta_j$ is the unique pole
in the interior of $\gamma$. Thus, by using residues,
\[I_j=\frac{i}{e^{\frac{1}{T}} e^{-i\chi_j}}2\pi
i\frac{1}{(\beta_j-\alpha_j)}=\frac{2\pi}{e^{\frac{2}{T}}-1}\] and
$I_j$ is independent of $\chi_j$.

Therefore by (\ref{EnergyEquation}) it follows that
\begin{eqnarray*}
L_{\xi}^q(T) &=& \frac{1}{\pi e^{-\frac{2}{T}}}
\frac{1}{T}\sum_{j=1}^{\infty}\chi_j^q\left|\langle\varphi_j,\xi\rangle\right|^2
\frac{2\pi}{e^{\frac{2}{T}}-1}\\ &=& \frac{2}{e^{-\frac{2}{T}}}
\frac{1}{T}\frac{1}{\left(e^{\frac{2}{T}}-1\right)}
\sum_{j=1}^{\infty}\chi_j^q\left|\langle\varphi_j,\xi\rangle\right|^2\\
&=&
\frac{2}{\left(1-e^{-\frac{2}{T}}\right)}\frac{1}{T}\left\|H_0^{\frac{q}{2}}\xi\right\|^2.
\end{eqnarray*}

Since
$\left(1-e^{-\frac{2}{T}}\right)=\frac{2}{T}+\mathcal{O}(\frac{1}{T^2})$,
 for large $T$ it is found that
\[L_{\xi}^q(T)\approx\left\|H_0^{\frac{q}{2}}\xi\right\|^2,\]with (for
$\xi\in\dom H^q_0$)
\[
\lim_{T\to\infty} L_{\xi}^q(T)=\langle\xi,H_0^q\xi\rangle.
\] Then we
conclude that the function \[ \N\ni m\mapsto \left\langle
e^{-iH_0m}\xi,H_0^qe^{-iH_0m}\xi\right\rangle
\] is bounded for
$\xi\in\dom H_0^q$, which is (of course) an expected result (see
Proposition~\ref{EqualityEquation}).

\subsection{Lower Bounded Green Functions} As a first theoretical
application we get dynamical instability from some lower bounds of
the Green functions. See \cite{DST} for a similar result in the
one-dimensional discrete Schr\"o\-dinger operators context; there,
a relation to transfer matrices allows interesting applications to
nontrivial models, what is not available in the unitary setting
yet (and it constitutes of an important open problem). As before,
$\lambda_j$ denote the increasing sequence of positive eigenvalues
of the abstract energy operator $A$, the ones we use to probe
(in)stability.

 Let $[\cdot]$ denotes the integer part of a real number and $|\cdot|$
indicates Lebesgue measure.

\begin{Theorem}\label{ApplTheorem} Suppose that there exist $K>0$ and
$\alpha>0$ such that for each $2N>0$ large enough  there exists a
nonempty Borel set $J(N)\subset S^1$ such that
\[\left|G_z^{\xi}(j)\right|\geq\frac{K}{N^{\alpha}},\qquad N\leq
j\leq2N,\]  holds for all $z=e^{-iE+\frac{1}{T}}$ with $E\in
J_T(N)=\{E''\in S^1:\exists\;E'\in J(N);|E''-E'|\leq\frac{1}{T}\}$
(the $\frac{1}{T}-$neighborhood de $J(N)$). Let $\delta>0$; then
for~$T$ large enough such that $N(T)=[T^\delta]$, one has
\[L_{\xi}^A(T)\geq\cte\lambda_{[T^{\delta}]}T^{\delta(1-2\alpha)-2}.\]
Moreover, if $\lambda_j\geq\cte j^{\gamma}$, $\gamma\ge 0$,  then
\[L_{\xi}^A(T)\geq\cte T^{\delta(\gamma-2\alpha+1)-2}.\]
\end{Theorem}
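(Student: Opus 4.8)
The plan is to start from the exact formula in Theorem~\ref{FormulaTheorem}, namely
\[
L_{\xi}^A(T)=\frac{1}{\pi e^{-2/T}}\,\frac{1}{T}\sum_{j=1}^{\infty}\lambda_j\int_0^{2\pi}\left|G_z^{\xi}(j)\right|^2\,dE,\qquad z=e^{-iE+1/T},
\]
and simply throw away most of the sum: since every term is nonnegative, I restrict the index $j$ to the block $N\le j\le 2N$ and restrict the $E$-integral to the set $J_T(N)$. On that block the hypothesis gives the pointwise bound $|G_z^{\xi}(j)|^2\ge K^2/N^{2\alpha}$, so
\[
L_{\xi}^A(T)\ge\frac{1}{\pi e^{-2/T}}\,\frac{1}{T}\sum_{j=N}^{2N}\lambda_j\cdot\frac{K^2}{N^{2\alpha}}\,|J_T(N)|.
\]
Then I would bound $\sum_{j=N}^{2N}\lambda_j\ge N\,\lambda_N$ using monotonicity of the $\lambda_j$ (there are roughly $N$ terms, each at least $\lambda_N$), and I need a lower bound on $|J_T(N)|$: because $J(N)$ is nonempty, it contains at least one point, and $J_T(N)$ contains the $1/T$-neighbourhood of that point, hence $|J_T(N)|\ge \cte/T$. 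Collecting these, and using that $e^{-2/T}\to1$ so $1/(\pi e^{-2/T})\ge\cte$ for large $T$,
\[
L_{\xi}^A(T)\ge\cte\,\frac{1}{T}\cdot N\,\lambda_N\cdot\frac{1}{N^{2\alpha}}\cdot\frac{1}{T}=\cte\,\lambda_N\,N^{1-2\alpha}\,T^{-2}.
\]
Now substitute the choice $N=N(T)=[T^{\delta}]$, so $N^{1-2\alpha}$ is comparable to $T^{\delta(1-2\alpha)}$ (up to constants, for $T$ large), giving $L_{\xi}^A(T)\ge\cte\,\lambda_{[T^{\delta}]}\,T^{\delta(1-2\alpha)-2}$, which is the first claimed bound. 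For the second, feed in $\lambda_j\ge\cte\,j^{\gamma}$, so $\lambda_{[T^{\delta}]}\ge\cte\,T^{\delta\gamma}$, and combine exponents: $T^{\delta\gamma}\cdot T^{\delta(1-2\alpha)-2}=T^{\delta(\gamma-2\alpha+1)-2}$.

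The only genuinely delicate points are bookkeeping ones. First, when replacing $N^{1-2\alpha}$ by $T^{\delta(1-2\alpha)}$ one must be careful about the sign of $1-2\alpha$: if $\alpha>1/2$ the exponent is negative and one uses $[T^{\delta}]\le T^{\delta}$ in the right direction, while if $\alpha<1/2$ one uses $[T^{\delta}]\ge T^{\delta}-1\ge \cte\,T^{\delta}$; either way the comparison holds for $T$ large at the cost of an absolute constant, so this is harmless. Second, one should make sure the hypothesis is being applied with $2N=2[T^{\delta}]$ "large enough", which is automatic once $T$ is large. Third, the estimate $\sum_{j=N}^{2N}\lambda_j\ge N\lambda_N$ counts $2N-N+1\ge N$ indices, each $\ge\lambda_N$ by monotonicity — elementary but worth stating. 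I do not expect any real obstacle: the proof is essentially "positivity lets us discard terms, then plug in the hypothesis", and the content is entirely in the statement's hypotheses, not in the argument. The step most likely to need a careful sentence is the lower bound $|J_T(N)|\ge\cte/T$ from nonemptiness of $J(N)$, since that is where the $1/T$-thickening is used, and this is exactly why the formula is evaluated on the circle of radius $e^{1/T}$ rather than on $S^1$ itself.
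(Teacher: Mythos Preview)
Your proposal is correct and follows essentially the same route as the paper's proof: restrict the sum to $N\le j\le 2N$, restrict the integral to $J_T(N)$, apply the hypothesis and $|J_T(N)|\ge 1/T$, then substitute $N=[T^{\delta}]$. If anything, you are more careful than the paper about the sign of $1-2\alpha$ when comparing $[T^{\delta}]^{1-2\alpha}$ with $T^{\delta(1-2\alpha)}$; the paper passes over this silently.
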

\begin{proof} By the formula in Theorem~\ref{FormulaTheorem}, or its more
general version~(\ref{formulaGeneR}),
\begin{eqnarray*}
L_{\xi}^{A}(T) &\ge& \frac{1}{\pi
e^{-\frac{2}{T}}}\frac{1}{T}\sum_{j=1}^{\infty}\lambda_j\int_0^{2\pi}
\left|G_z^{\xi}(j)\right|^2dE\\ &\geq&
\frac{\cte}{T}\sum_{j=N(T)}^{2N(T)}\lambda_j\int_0^{2\pi}
\left|G_z^{\xi}(j)\right|^2dE\\ &\geq&
\frac{\cte}{T}\lambda_{N(T)}\sum_{j=N(T)}^{2N(T)}\int_{J_T(N)}\left|G_z^{\xi}(j)\right|^2dE\\
&\geq&
\frac{\cte}{T}\lambda_{N(T)}\sum_{j=N(T)}^{2N(T)}\frac{K^2}{N(T)^{2\alpha}}|J_T(N)|\\
&=& \frac{\cte}{T}|J_T(N)|\lambda_{N(T)}\frac{K^2}{N(T)^{2\alpha-1}}\\
&=&
\frac{\cte}{T}|J_T(N)|\lambda_{[T^{\delta}]}\frac{1}{[T^{\delta}]^{2\alpha-1}}\\
&\geq& \cte\lambda_{[T^{\delta}]}T^{\delta(1-2\alpha)-2};
\end{eqnarray*}
we have used that $|J_T(N)|\geq\frac{1}{T}$.
If $\lambda_j\geq\cte j^{\gamma}$ then
\[L_{\xi}^A(T)\geq\cte T^{\delta\gamma}T^{\delta(1-2\alpha)-2}=
\cte T^{\delta(\gamma-2\alpha+1)-2}.
\] The proof is complete.
\end{proof}

The above theorem becomes appealing when the exponent of $T$
 is greater than zero and instability is obtained, for instance when $\delta(\gamma-2\alpha+1)>2$
in case $\lambda_j\geq\cte j^{\gamma}$. However, up to now we have
not yet been able to find explicit estimates in models of
interest; in any event, we think it will be useful the future
applications and so we point out some speculations. First, note
that it applies even if the set $J(N)$ is a single point!
Nevertheless, we expect that Theorem~\ref{ApplTheorem} will be
applied to models whose Floquet operators have some kind of
``fractal spectrum'' (usually singular continuous or uniformly
H\"older continuous spectral measures) and, somehow, $\alpha$
should be related to dimensional properties of those spectra;
indeed, this was our first motivation for the derivation of this
result and, in our opinion, such applications are among the most
interesting open problems left here. 

\

\subsection{Rank-One Kicked Perturbations}
\label{RankSection} Now consider \[H(t)=H_0+\kappa
P_{\phi}\sum_n\delta(t-n2\pi),
\] with $H_0$   as in Subsection~\ref{AutonomousSection}, with eigenvectors
$\{\varphi_j\}_{j=1}^{\infty}$ and $\chi_j$ the corresponding
eigenvalues; $P_{\phi}(\cdot)=\langle\phi,\cdot\rangle\phi$ where
$\kappa\in\R$ and $\phi$ is a normalized cyclic vector for $H_0$, in the sense that $\|\phi\|=1$ and the closed subspace spanned by
$\{H_0^m\phi:m\in\N\}$ equals $\hil$. Let
\[\phi=\sum_jb_j\varphi_j.\] In this case (see~\cite{C,B})
\[
U_F=U_0\left(\Id+\alpha P_{\phi}\right),
\] with $U_0=e^{-i2\pi H_0}$ and $\alpha=(e^{-i2\pi\kappa}-1)$. Note that
$\phi\in\dom H_0^q$, $\forall q>0$, and so for $\xi\in\dom H_0^q$,
\[
U_F\xi = U_0\xi + \alpha  \langle \phi,\xi\rangle \,U_0\phi
\]  also belongs to $\dom H_0^q$; a simple iteration process shows that
$U_F^m\xi\in \dom H_0^q$ for all $m\ge0$ and we are justified in
using the formula in Theorem~\ref{FormulaTheorem} to estimate
Laplace averages.

We are interested in $\eta^{(z)}=R_z(U_F)\varphi_1$. As $|z|\neq1$
it follows that $\eta^{(z)}$ belongs to the Hilbert space and so
one can write
\[\eta^{(z)}=\sum_{j=1}^{\infty}a_j\varphi_j.\] Note that
$a_j=G_z(j)$ and we have
\begin{equation}\label{RankEquation}
U_F\eta^{(z)}-z\eta^{(z)}=\varphi_1.
\end{equation}
By the relation
\begin{eqnarray*}
U_F\eta^{(z)} &=& U_0\eta^{(z)}+\alpha U_0P_{\phi}\eta^{(z)}\\ &=&
\sum_{j=1}^{\infty}a_jU_0\varphi_j+\alpha
U_0\langle\phi,\eta^{(z)}\rangle\phi\\ &=&
\sum_{j=1}^{\infty}a_je^{-i2\pi\chi_j}\varphi_j+
\alpha\langle\phi,\eta^{(z)}\rangle\sum_{j=1}^{\infty}b_je^{-i2\pi\chi_j}\varphi_j\\
&=& \sum_{j=1}^{\infty}(a_j+\alpha\langle\phi,\eta^{(z)}\rangle
b_j)e^{-i2\pi\chi_j}\varphi_j,
\end{eqnarray*}
 and (\ref{RankEquation}) it follows that
\[\sum_{j=1}^{\infty}(a_j+\alpha\langle\phi,\eta^{(z)}\rangle
b_j)e^{-i2\pi\chi_j}\varphi_j
-z\sum_{j=1}^{\infty}a_j\varphi_j=\varphi_1,\] that is,
\[\sum_{j=1}^{\infty}\left[a_j(e^{-i2\pi\chi_j}-z)+\alpha\langle\phi,\eta^{(z)}\rangle
b_je^{-i2\pi\chi_j}\right]\varphi_j=\varphi_1,\] and we get the
equations
\begin{eqnarray*}a_1(e^{-i2\pi\chi_1}-z)+\alpha\langle\phi,\eta^{(z)}\rangle
b_1e^{-i2\pi\chi_1}&=&1,\\
a_j(e^{-i2\pi\chi_j}-z)+\alpha\langle\phi,\eta^{(z)}\rangle
b_je^{-i2\pi\chi_j} &=& 0\qquad\text{for}\;\; j>1.
\end{eqnarray*} Thus
\begin{equation}\label{RankEquation1}
a_1=\frac{1-\alpha\langle\phi,\eta^{(z)}\rangle
b_1e^{-i2\pi\chi_1}}{e^{-i2\pi\chi_1}-z},
\end{equation}
\begin{equation}\label{rankEquation2}
a_j=-\frac{\alpha\langle\phi,\eta^{(z)}\rangle
b_je^{-i2\pi\chi_j}}{e^{-i2\pi\chi_j}-z}, \qquad j>1.
\end{equation}

For the trivial case $\alpha=0$ or, equivalently, $\kappa\in\Z$,
one has
\[a_1=\frac{1}{e^{-i2\pi\chi_1}-z},
\]
\[
a_j=0,\qquad j>1,
\] and $\eta^{(z)}=\frac{\varphi_1}{e^{-i2\pi\chi_1}-z}$. In
this case the analysis of $L_{\varphi_1}^q(T)$ is reduced to
\[\int_0^{2\pi}|a_1|^2dE=\int_0^{2\pi}\frac{dE}{|e^{-i2\pi\chi_1}-z|^2}=
\frac{2\pi}{e^{\frac{2}{T}}-1}\] as calculate in
Subsection~\ref{AutonomousSection}. Thus
$L_{\varphi_1}^q(T)\approx\|H_0^q\varphi_1\|$ for large~$T$, as
expected.

Returning to the general case $\alpha\neq0$, note that
\begin{eqnarray*}
\langle\phi,\eta^{(z)}\rangle &=&
\sum_{j=1}^{\infty}\overline{b}_ja_j\\ &=&
\overline{b}_1\Big(\frac{1-\alpha\langle\phi,\eta^{(z)}\rangle
b_1e^{-i2\pi\chi_1}}{e^{-i2\pi\chi_1}-z}\Big)+\sum_{j=2}^{\infty}
\overline{b}_j\frac{(-\alpha)\langle\phi,\eta^{(z)}\rangle
b_je^{-i2\pi\chi_j}}{e^{-i2\pi\chi_j}-z}\\ &=&
\frac{\overline{b}_1}{e^{-i2\pi\chi_1}-z}-
\langle\phi,\eta^{(z)}\rangle\sum_{j=1}^{\infty}
\frac{\alpha|b_j|^2e^{-i2\pi\chi_j}}{e^{-i2\pi\chi_j}-z}.
\end{eqnarray*}
So
\[\langle\phi,\eta^{(z)}\rangle=\frac{\overline{b}_1}{(e^{-i2\pi\chi_1}-z)}
\Big[1+\sum_{j=1}^{\infty}
\frac{\alpha|b_j|^2e^{-i2\pi\chi_j}}{e^{-i2\pi\chi_j}-z}\Big]^{-1}.\]
By  denoting \[\tau(z)=1+\sum_{j=1}^{\infty}
\frac{\alpha|b_j|^2e^{-i2\pi\chi_j}}{e^{-i2\pi\chi_j}-z},\] by
(\ref{RankEquation1}) and (\ref{rankEquation2}) we finally obtain
the relations
\[a_1=\frac{1}{e^{-i2\pi\chi_1}-z}-\frac{\alpha|b_1|^2
e^{-i2\pi\chi_1}\tau(z)^{-1}}{(e^{-i2\pi\chi_1}-z)^2},\]
\[a_j=-\frac{\alpha b_j\overline{b}_1e^{-i2\pi\chi_j}\tau(z)^{-1}}
{(e^{-i2\pi\chi_1}-z)(e^{-i2\pi\chi_j}-z)},\qquad j>1.\]

\subsubsection{A Harmonic Oscillator}Now we present an application of the
above relations to a kicked harmonic oscillator with natural
frequency equals to 1; we will write $L_{\xi}^q=
L_{\xi}^{H_0^q}$.

\begin{Proposition}\label{propHOw1}
Let $H_0$ be a harmonic oscillator hamiltonian with appropriate
parameters so that its eigenvalues are integers $j$, $j\ge1$, and
$U_F=U_0(\Id+\alpha P_{\phi})$ as above. Then for any
$\kappa\in\R$ and cyclic vector $\phi$ for $H_0$, there exists
$C>0$ so that, for $T$ large enough,
\[
L_{\varphi_1}^q(T)\le C,
\] where $\varphi_1$ is the harmonic oscillator ground state. Hence we have $H_0^q$-dynamical stability.
\end{Proposition}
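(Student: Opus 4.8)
The plan is to apply Theorem~\ref{FormulaTheorem} with $\xi=\varphi_1$ and $A=H_0^q$, so that
\[
L_{\varphi_1}^q(T)=\frac{1}{\pi e^{-\frac{2}{T}}}\,\frac{1}{T}\sum_{j=1}^{\infty}j^q\int_0^{2\pi}\left|G_z(j)\right|^2\,dE,
\qquad z=e^{-iE+\frac{1}{T}},
\]
and then to bound each integral $\int_0^{2\pi}|G_z(j)|^2\,dE$ using the explicit formulas for $a_1=G_z(1)$ and $a_j=G_z(j)$ ($j>1$) derived at the end of Subsection~\ref{RankSection}. The crucial simplification comes from the hypothesis that the eigenvalues of $H_0$ are the integers: then $e^{-i2\pi\chi_j}=e^{-i2\pi j}=1$ for every $j$, so all the denominators $e^{-i2\pi\chi_j}-z$ collapse to the single quantity $1-z=1-e^{-iE+\frac{1}{T}}$. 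This is the key structural fact I would exploit first.

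With that substitution, $\tau(z)=1+\dfrac{\alpha}{1-z}\sum_j|b_j|^2=1+\dfrac{\alpha}{1-z}=\dfrac{1-z+\alpha}{1-z}$ since $\|\phi\|^2=\sum_j|b_j|^2=1$. Note $1-z+\alpha=1-e^{-i2\pi\kappa}-e^{-iE+\frac1T}+1\cdot$... more precisely $1+\alpha=e^{-i2\pi\kappa}$, so $1-z+\alpha=e^{-i2\pi\kappa}-z=e^{-i2\pi\kappa}-e^{-iE+\frac1T}$. Hence $\tau(z)^{-1}=\dfrac{1-z}{e^{-i2\pi\kappa}-z}$, and plugging into the formulas for $a_1,a_j$ one gets, after cancellation,
\[
a_1=\frac{1}{1-z}-\frac{\alpha|b_1|^2}{(1-z)(e^{-i2\pi\kappa}-z)},\qquad
a_j=-\frac{\alpha b_j\overline{b}_1}{(1-z)(e^{-i2\pi\kappa}-z)}\ \ (j>1).
\]
So $|a_j|^2=\dfrac{|\alpha|^2|b_1|^2}{|1-z|^2\,|e^{-i2\pi\kappa}-z|^2}\,|b_j|^2$ for $j>1$, and $a_1$ differs from $\dfrac{1}{1-z}$ by a term of exactly the same shape. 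The next step is to integrate: both $\int_0^{2\pi}\dfrac{dE}{|1-z|^2}$ and $\int_0^{2\pi}\dfrac{dE}{|1-z|^2|e^{-i2\pi\kappa}-z|^2}$ are computed by the residue method already used in Subsection~\ref{AutonomousSection} (the poles $e^{\pm1/T}$ and $e^{\pm1/T}e^{-i2\pi\kappa}$ are off the unit circle), giving values that are $O(T)$ as $T\to\infty$ — indeed $\int_0^{2\pi}|1-z|^{-2}dE=\dfrac{2\pi}{e^{2/T}-1}\sim\pi T$, and the second integral is likewise $O(T)$, uniformly since $\kappa$ is fixed (one should check the case $e^{-i2\pi\kappa}=1$, i.e. $\kappa\in\Z$, separately, but that is the trivial $\alpha=0$ case already handled).

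Assembling, the $j$-sum becomes
\[
\sum_{j=1}^{\infty}j^q\int_0^{2\pi}|a_j|^2\,dE
\le \cte\, T\Big(|b_1|^q+\sum_{j\ge1}j^q|b_j|^2\Big)\cdot(\text{bounded in }T),
\]
and $\sum_j j^q|b_j|^2=\|H_0^{q/2}\phi\|^2<\infty$ because $\phi\in\dom H_0^q\subset\dom H_0^{q/2}$ (as already noted, a cyclic vector in $\dom H_0^q$). Dividing by $\pi e^{-2/T}T$ leaves a quantity bounded in $T$, which is the claim $L_{\varphi_1}^q(T)\le C$, and then Lemma~\ref{lema.49} (or directly Proposition~\ref{EqualityEquation} and the definition) gives $H_0^q$-dynamical stability. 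The main obstacle — really the only place needing care — is the uniform-in-$T$ control of $\int_0^{2\pi}|1-z|^{-2}|e^{-i2\pi\kappa}-z|^{-2}dE$: one must verify that the residue computation produces a bound growing no faster than $T$ (not $T^2$), which hinges on the two pole pairs $e^{\pm1/T}$ and $e^{\pm1/T-i2\pi\kappa}$ staying a fixed distance apart on the complex plane as $T\to\infty$ when $\kappa\notin\Z$; that separation is exactly where the integer-eigenvalue hypothesis and fixed $\kappa$ are used, and it is what prevents the extra factor of $T$ that would destroy boundedness.
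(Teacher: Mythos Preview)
Your proposal is correct and follows essentially the same route as the paper: exploit the integer-eigenvalue hypothesis to reduce every $e^{-i2\pi\chi_j}$ to $1$, compute $\tau(z)$ and hence the explicit closed forms for $a_1$ and $a_j$, evaluate the $E$-integrals by residues, and sum against $j^q|b_j|^2=\langle\phi,H_0^q\phi\rangle$. The only difference is presentational: the paper carries out the residue calculations in full and arrives at an explicit expression for $L_{\varphi_1}^q(T)$ before bounding it, whereas you argue more qualitatively that each integral is $O(T)$ via the pole-separation observation (which is exactly the point the paper's explicit formula confirms).
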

\begin{proof} We use the above notation; note that $\varphi_1\in\dom
H_0^q$, $\forall q>0$ and Theorem~\ref{FormulaTheorem} can be
applied. In this case we have
\[\tau(z)=1+\sum_{j=1}^{\infty}\frac{\alpha|b_j|^2}{1-z}=1+
\frac{\alpha}{1-z}\|\phi\|^2=\frac{1-z+\alpha}{1-z},\] and so
\[a_1=\frac{1}{1-z}-\frac{\alpha|b_1|^2}{(1-z)(e^{-i2\pi\kappa}-z)},\]
\[a_j=-\frac{\alpha b_j\overline{b}_1}{(1-z)(e^{-i2\pi\kappa}-z)},\qquad
j>1.
\] Now we evaluate $I_j := \int_0^{2\pi}|a_j|^2dE$. For
$j>1$ and $\gamma(E)=e^{iE}$, $0\leq E\leq2\pi$,
\begin{eqnarray*}
\int_0^{2\pi}|a_j|^2dE &=& \int_0^{2\pi}\Big|\frac{\alpha
b_j\overline{b}_1}{(1-z)(e^{-i2\pi\kappa}-z)}\Big|^2dE\\ &=&
|\alpha|^2|b_j|^2|\overline{b}_1|^2\int_0^{2\pi}\frac{dE}
{\left|(1-e^{-iE}e^{\frac{1}{T}})(e^{-i2\pi\kappa}-e^{-iE}e^{\frac{1}{T}})\right|^2}\\
&=&
\frac{|\alpha|^2|b_j|^2|\overline{b}_1|^2}{ie^{\frac{2}{T}}e^{-i2\pi\kappa}}
\int_{\gamma}\frac{wdw}{(w-\beta_1)(w-\beta_2)
(w-\beta_3)(w-\beta_4)},
\end{eqnarray*}
where $\beta_1=e^{\frac{1}{T}}$, $\beta_2=e^{-\frac{1}{T}}$,
$\beta_3=e^{\frac{1}{T}}e^{i2\pi\kappa}$ and
$\beta_4=e^{-\frac{1}{T}}e^{i2\pi\kappa}$; only  $\beta_2$ and
$\beta_4$ are poles in the interior of $\gamma$. By residue, for
$j>1$,
\begin{eqnarray*} I_j &=&
\frac{2\pi|\alpha|^2|b_j|^2|\overline{b}_1|^2}
{e^{\frac{2}{T}}e^{-i2\pi\kappa}}\times\\ & &
\left(\frac{\beta_2}{(\beta_2-\beta_1)
(\beta_2-\beta_3)(\beta_2-\beta_4)}
+\frac{\beta_4}{(\beta_4-\beta_1)
(\beta_4-\beta_2)(\beta_4-\beta_3)}\right)\\ &=&
\frac{2\pi\alpha|b_j|^2|\overline{b_1}|^2}{(e^{\frac{2}{T}}-1)
(e^{-i2\pi\kappa}-e^{\frac{2}{T}})}-
\frac{2\pi\alpha|b_j|^2|\overline{b_1}|^2e^{i2\pi\kappa}}{(e^{\frac{2}{T}}-1)
(e^{i2\pi\kappa}-e^{\frac{2}{T}})}\\
&=&
\frac{2\pi\alpha|b_j|^2|\overline{b}_1|^2}{(e^{\frac{2}{T}}-1)}\left(\frac{1}
{e^{-i2\pi\kappa}-e^{\frac{2}{T}}}-\frac{e^{i2\pi\kappa}}
{e^{i2\pi\kappa}-e^{\frac{2}{T}}}\right),
\end{eqnarray*}
and for $j=1$
\begin{eqnarray*}
I_1 &=&
\int_0^{2\pi}\Big|\frac{1}{1-z}-\frac{\alpha|b_1|^2}{(1-z)(e^{-i2\pi\kappa}-z)}\Big|^2dE\\
&=& \int_0^{2\pi}\frac{dE}{(1-z)(1-\overline{z})}-
\overline{\alpha}|b_1|^2\int_0^{2\pi}\frac{dE}{(1-z)(1-\overline{z})
(e^{i2\pi\kappa}-\overline{z})}\\ & &
-\alpha|b_1|^2\int_0^{2\pi}\frac{dE}{(1-z)(1-\overline{z})
(e^{-i2\pi\kappa}-z)}\\ & &
+|\alpha|^2|b_1|^4\int_0^{2\pi}\frac{dE}{(1-z)(1-\overline{z})
(e^{-i2\pi\kappa}-z)(e^{i2\pi\kappa}-\overline{z})};
\end{eqnarray*}
evaluating the integrals we obtain
\begin{eqnarray*}
I_1 &=& \frac{2\pi}{(e^{\frac{2}{T}}-1)}
-\frac{2\pi|b_1|^2}{(e^{\frac{2}{T}}-1)}
-\frac{2\pi|b_1|^2}{(e^{i2\pi\kappa}-e^{\frac{2}{T}})}
-\frac{2\pi\alpha|b_1|^2}{(e^{\frac{2}{T}}-1)(e^{-i2\pi\kappa}-e^{\frac{2}{T}})}\\
& & + \frac{2\pi\alpha|b_1|^4}{(e^{\frac{2}{T}}-1)}\left(\frac{1}
{e^{-i2\pi\kappa}-e^{\frac{2}{T}}}-\frac{e^{i2\pi\kappa}}
{e^{i2\pi\kappa}-e^{\frac{2}{T}}}\right),
\end{eqnarray*}
and after inserting this in the expression of the average energy
we get
\begin{eqnarray*}
L_{\varphi_1}^q(T) &=&
\frac{2}{(1-e^{-\frac{2}{T}})T}\left(1-|b_1|^2-\frac{\alpha|b_1|^2}
{(e^{-i2\pi\kappa}-e^{\frac{2}{T}})}\right)\\ & &
-\frac{2|b_1|^2}{e^{-\frac{2}{T}}(e^{i2\pi\kappa}-e^{\frac{2}{T}})T}\\
& & +\frac{2\alpha|b_1|^2}{(1-e^{-\frac{2}{T}})T} \left(\frac{1}
{e^{-i2\pi\kappa}-e^{\frac{2}{T}}}-\frac{e^{i2\pi\kappa}}
{e^{i2\pi\kappa}-e^{\frac{2}{T}}}\right)\langle\phi,H_0^q\phi\rangle.
\end{eqnarray*}
Therefore, for large $T$ there is a constant $C(\kappa,b_1)>0$ so
that
\[L_{\varphi_1}^q(T)\leq C(\kappa,b_1)\left(1+\langle\phi,H_0^q\phi\rangle
+\frac{1}{T}\right).\] This completes the proof.
\end{proof}

For harmonic oscillators with eigenvalues $\omega j$,
$\omega\ne1$, the evaluations of the resulting integrals are more
intricate and were not carried out.

\

\subsection{Kicked Perturbations by a $V$ in $\LL^2(S^1)$}
\label{KickedSection}
\subsubsection{Kicked Linear Rotor}Consider \[H(t)=\omega
p+V(x)\sum_{n\in\Z}\delta(t-n2\pi),
\] where $p=-i\frac{d}{dx}$,
$\omega\in\R$ and $V\in \LL^2(S^1)$. The Hilbert space is
$\LL^2(S^1)$; this model was considered in
\cite{Bell,dBF,deOlinear} and references therein. The Floquet
operator is
\[U_F=U_V=e^{-i2\pi\omega p}e^{-iV(x)}.\] Denote by
$\varphi_j(x)=e^{ijx}/\sqrt{2\pi}$, $0\le x<2\pi$ and $j\in\Z$, be
the eigenvectors of $p^2$ whose eigenvalues are the square of
integers $j^2$; all eigenvalues have multiplicity 2 (the
corresponding eigenvectors are $\varphi_j$ and $\varphi_{-j}$),
except for the null eigenvalue which is simple.

Consider the case $\omega=1$; then
\[\left((U_F-z)^{-1}\varphi_0\right)(x)=\frac{1}{\sqrt{2\pi}(e^{-iV(x)}-z)},
\] and so
\[G_z^{\varphi_0}(j)=\langle\varphi_j,R_z(U_F)\varphi_0\rangle=\frac{1}{2\pi}
\int_0^{2\pi}\frac{e^{-ijx}}{e^{-iV(x)}-z}dx.\] Denote $I_j :=
\int_0^{2\pi}|G_z^{\varphi_0}(j)|^2dE$. It follows that
\begin{eqnarray*}
I_j &=& \frac{1}{(2\pi)^2}
\int_0^{2\pi}\Big|\int_0^{2\pi}\frac{e^{-ijx}}{e^{-iV(x)}-z}dx\Big|^2dE \\
\\ &=& \frac{1}{(2\pi)^2}
\int_0^{2\pi}\int_0^{2\pi}e^{-ijx}e^{ijy}\left(\int_0^{2\pi}\frac{dE}
{(e^{-iV(x)}-z)(e^{iV(y)}-\overline{z})}\right)dxdy.
\end{eqnarray*}
For $x,y\in S^1$ fixed denote $I_{xy} := \int_0^{2\pi}\frac{dE}
{(e^{-iV(x)}-z)(e^{iV(y)}-\overline{z})}$. If $\gamma(E)=e^{iE}$,
$0\leq E\leq2\pi$, one has
\begin{eqnarray*}
I_{xy} &=& \int_0^{2\pi}\frac{dE}
{(e^{-iV(x)}-e^{-iE}e^{\frac{1}{T}})(e^{iV(y)}-e^{iE}e^{\frac{1}{T}})}\\
&=& \int_0^{2\pi}\frac{dE}
{e^{-iE}e^{-iV(x)}(e^{iE}-e^{iV(x)}e^{\frac{1}{T}})e^{\frac{1}{T}}
(e^{-\frac{1}{T}}e^{iV(y)}-e^{iE})}\\ &=&
-\frac{1}{e^{\frac{1}{T}}e^{-iV(x)}}\frac{1}{i}\int_{\gamma}\frac{dw}
{(w-e^{iV(x)}e^{\frac{1}{T}})(w-e^{-\frac{1}{T}}e^{iV(y)})},
\end{eqnarray*}
and by residues
\[I_{xy}=-\frac{2\pi}{e^{\frac{1}{T}}e^{-iV(x)}(e^{-\frac{1}{T}}e^{iV(y)} -
e^{iV(x)}e^{\frac{1}{T}})}=\frac{2\pi}{(e^{\frac{2}{T}}-e^{-iV(x)}e^{iV(y)})}.\]
Hence
\begin{eqnarray}\label{rankEquation3}
I_j &=& \frac{1}{(2\pi)^2}
\int_0^{2\pi}\int_0^{2\pi}e^{-ijx}e^{ijy}
\frac{2\pi}{(e^{\frac{2}{T}}-e^{-iV(x)}e^{iV(y)})}dxdy\nonumber\\
&=& \frac{1}{2\pi}
\int_0^{2\pi}e^{-ijx}\Big(\int_0^{2\pi}\frac{e^{ijy}dy}
{(e^{\frac{2}{T}}-e^{-iV(x)}e^{iV(y)})}\Big)dx\\ &=&
\frac{1}{2\pi}
\int_0^{2\pi}\frac{e^{-ijx}}{e^{-iV(x)}}\Big(\int_0^{2\pi}\frac{e^{ijy}dy}
{(e^{\frac{2}{T}}e^{iV(x)}-e^{iV(y)})}\Big)dx. \nonumber
\end{eqnarray}

The analytical evaluation of these integrals is not a simple task.   As an
illustration, consider the particular potential $V(x)=x$; since by
Cauchy's integral formula
\[\int_0^{2\pi}\frac{e^{ijy}dy}
{(e^{\frac{2}{T}}e^{ix}-e^{iy})}=-\frac{1}{i}
\int_{\gamma}\frac{w^{j-1}dw}{(w-e^{\frac{2}{T}}e^{ix})}=0,\quad\text{if}\;\;j\geq1,\]
and by residue theorem
\[\int_0^{2\pi}\frac{e^{ijy}dy}
{(e^{\frac{2}{T}}e^{ix}-e^{iy})}= -\frac{1}{i}
\int_{\gamma}\frac{dw}{w^{1-j}(w-e^{\frac{2}{T}}e^{ix})}=
\frac{2\pi}{(e^{\frac{2}{T}}e^{ix})^{1-j}},\quad\text{if}\;\;j\leq0,
\]
it is found that
\[
I_j=0\qquad\text{if}\;\;j\geq1
\] and
\[
I_j=\frac{1}{2\pi}\int_0^{2\pi}\frac{e^{-ijx}}{e^{-ix}}\frac{2\pi}{(e^{\frac{2}{T}}e^{ix})^{1-j}}dx=
\frac{1}{e^{\frac{2}{T}(1-j)}}\int_0^{2\pi}dx=
\frac{2\pi}{e^{\frac{2}{T}(1-j)}},\quad\text{if}\;\;j\leq0.\]
Therefore, by~(\ref{formulaGeneR}) it follows that for any $q>0$
\[L_{\varphi_0}^{p^{2q}}(T)\ge\frac{1}{\pi
e^{-\frac{2}{T}}}\frac{1}{T}\sum_{j=1}^{\infty}j^{2q}I_{-j}=
\frac{2}{T}\sum_{j=1}^{\infty}j^{2q}e^{-\frac{2}{T}j}\] and we
conclude that (see the Appendix)
\[
L_{\varphi_0}^{p^{2q}}(m)\ge \mathrm{cte}\; m^{2q}
\] and also that the sequence  $m\mapsto \left\langle U_F^m \varphi_0,p^{2q}U_F^m
\varphi_0\right\rangle $ is unbounded. This behavior is expected
since the spectrum of $U_F$ is absolutely continuous in this case
\cite{Bell}, but here we got the result explicitly without passing
through spectral arguments, although in a rather involved way;
indeed, a much simpler derivation is
possible by direct calculating $U_F^m\varphi_0$ and the corresponding expectation values.

For $V(x)=kx$ with integer $k\geq2$,  similar results are
obtained, that is
\[I_j=\left\{\begin{array}{ccc}
0 & \mbox{if} & j=lk,\; l\geq1\\
\frac{2\pi}{e^{{2/T}}(1-l)} & \mbox{if} & j=lk,\;l\leq0
\end{array}\right.,\] and so
\[L_{\varphi_0}^{p^{2q}}(T)\geq\frac{2k^{2q}}{T}
\sum_{l=1}^{\infty}l^{2q}e^{-\frac{2}{T}l}.
\] Therefore, we have the following lower bound for the Laplace average
\[
L_{\varphi_0}^{p^{2q}}(m)\ge C(k,q)\,m^{2q}
\] (see Appendix). The
same is valid if $V(x)=kx$ with $k$ denoting any negative integer
number.

\subsubsection{Power Kicked Systems}
Due to the difficulty in evaluating the integrals in
(\ref{rankEquation3}), in order to estimate
$L_{\varphi_0}^{p^{2q}}(T)$ in some situations we take an
alternative way.

Consider the Kicked models in $\LL^2(S^1)$ with Floquet operator
\begin{equation}\label{eqfpv}U_F=U_V=e^{-i2\pi\omega f(p)}e^{-iV(x)},
\end{equation} corresponding to the
hamiltonian \[H(t)=\omega f(p)+V(x)\sum_{n\in\Z}\delta(t-2\pi n),
\]
with $p,V,\varphi_j$  as before and $f(p)=p^N$ for some $N\in\N$.
Let $\mathcal{F}:\LL^2(S^1)\rightarrow l^2(\Z)$ be the Fourier
transform. Then $\mathcal{F}U_V\mathcal{F}^{-1}:l^2(\Z)\rightarrow
l^2(\Z)$ and
\[\mathcal{F}U_V\mathcal{F}^{-1}=
\mathcal{F}e^{-i2\pi\omega f(p)}e^{-iV(x)}\mathcal{F}^{-1}
=\mathcal{F}e^{-i2\pi\omega
f(p)}\mathcal{F}^{-1}\mathcal{F}e^{-iV(x)}\mathcal{F}^{-1}\] where
$\mathcal{F}e^{-i2\pi\omega f(p)}\mathcal{F}^{-1}$ is represented
by a diagonal matrix $D$ whose elements are
\[D(m,n)=e^{-i2\pi\omega f(n)}\delta_{mn},
\] and
$\mathcal{F}e^{-iV(x)}\mathcal{F}^{-1}$ is represented by a matrix
$W$ whose elements are
\[W(m,n)=(\mathcal{F}\rho)(m-n)=\hat{\rho}(m-n),
\] where
$\rho(x)=\frac{1}{\sqrt{2\pi}}e^{-iV(x)}$. Denote $B=DW$; so
\[
B(m,n)=e^{-i2\pi\omega f(n)}\hat{\rho}(m-n)
\] and
\begin{equation}\label{rankEquation4} U_V=\mathcal{F}^{-1}B\mathcal{F}.
\end{equation}
Put $\eta^{(z)}=R_z(U_V)\varphi_0$; then
\[U_V\eta^{(z)}-z\eta^{(z)}=\varphi_0,
\] and using
(\ref{rankEquation4}) we obtain
\[B\mathcal{F}\eta^{(z)}-z\mathcal{F}\eta^{(z)}=\mathcal{F}\varphi_0.
\]
Thus, for each $n\in\Z$,
\[(B\mathcal{F}\eta^{(z)})(n)-(z\mathcal{F}\eta^{(z)})(n)=(\mathcal{F}\varphi_0)(n),
\]
so that
\begin{equation}\label{rankEquation5}
e^{-i2\pi\omega
f(n)}\sum_{j\in\Z}\hat{\rho}(n-j)G_z^{\varphi_0}(j)-zG_z^{\varphi_0}(n)=\delta_{n0}.
\end{equation}

\

\noindent {\bf Tridiagonal Case} In order to deal with the above
equations, we try to simplify them by supposing that $V$ is such
that $\hat{\rho}(m-n)=0$ if $|m-n|>1$. Then, for each $n\in\Z$
fixed (\ref{rankEquation5}) becomes
\begin{equation}\label{rankEquation6}
e^{-i2\pi\omega
f(n)}\sum_{|n-j|\leq1}\hat{\rho}(n-j)G_z^{\varphi_0}(j)-zG_z^{\varphi_0}(n)=\delta_{n0}
\end{equation} and $\mathcal{F}^{-1}U_V\mathcal{F}=B$ is tridiagonal and
has the structure
\[B=\left(\begin{array}{cccccc}
\ddots & & & & &   \\
       & g(-1)\hat{\rho}(0) & g(-1)\hat{\rho}(-1)& & &  \\
       & \hat{\rho}(1) & \hat{\rho}(0) & \hat{\rho}(-1) & &
        \\ & & g(1)\hat{\rho}(1) & g(1)\hat{\rho}(0)
         & g(1)\hat{\rho}(-1) &
        \\ & & & g(2)\hat{\rho}(1) & g(2)\hat{\rho}(0) & \\ & &
       & &  & \ddots\\
\end{array}\right)\] where $g(n)=e^{-i2\pi\omega
f(n)}$.

Now, a tridiagonal unitary operator $U$ on $l^2(\Z)$
 is either unitarily equivalent to a (bilateral) shift
operator, or it is an infinite direct sum of $2\times2$ and
$1\times1$ unitary matrices, as shown in Lemma~3.1 of \cite{BHJ}.
For proving this result it was only used  that  $U$ is unitary and
$Ue_k=\alpha_ke_{k-1}+\beta_ke_k+\gamma_ke_{k+1}$, where $\{e_k\}$
is the canonical basis of $l^2(\Z)$, that is,
\[U=\left(\begin{array}{ccccc} \ddots & \alpha_{k-1} & & & \\ &
\beta_{k-1} & \alpha_k & & \\ & \gamma_{k-1} & \beta_k &
\alpha_{k+1} & \\ & & \gamma_k & \beta_{k+1} & \\ & & &
\gamma_{k+1} & \ddots
\end{array}\right)\] It then follows that
for all $k\in\Z$
\[|\alpha_k|^2+|\beta_k|^2+|\gamma_k|^2=1,\]
\[\gamma_{k-1}\overline{\beta_{k-1}}+\beta_k\overline{\alpha_k}=0,\]
\[\alpha_k\overline{\gamma_k}=0.\]
Applying these relations to $B=\mathcal{F}^{-1}U_V\mathcal{F}$ we
obtain
\begin{itemize}
\item If $\hat{\rho}(-1)\neq0$ then
$\hat{\rho}(1)=\hat{\rho}(0)=0$ and $|\hat{\rho}(-1)|=1$.
\item If $\hat{\rho}(1)\neq0$ then
$\hat{\rho}(-1)=\hat{\rho}(0)=0$ and $|\hat{\rho}(1)|=1$.
\item If $\hat{\rho}(0)\neq0$ then
$\hat{\rho}(1)=\hat{\rho}(-1)=0$ and $|\hat{\rho}(0)|=1$.
\end{itemize}

The next step is to investigate these cases. If
$\hat{\rho}(0)\neq0$ it reduces to the autonomous case  $H(t)=H_0$
previously considered.

The cases $\hat{\rho}(-1)\neq0$ and $\hat{\rho}(1)\neq0$ are
similar, so we only discuss that $\hat{\rho}(1)\neq0$. For
$n\in\Z$ fixed,  equation (\ref{rankEquation6}) takes the form
\begin{equation}\label{EquationEsq}
e^{-i2\pi\omega
f(n)}\hat{\rho}(1)G_z^{\varphi_0}(n-1)-zG_z^{\varphi_0}(n)=\delta_{n0},
\end{equation}
so we can write $G_z^{\varphi_0}(n)$ in terms of
$G_z^{\varphi_0}(0)$ and $G_z^{\varphi_0}(-1)$ for all $n\in\Z$.
More precisely
\[G_z^{\varphi_0}(n)=\frac{e^{-i2\pi\omega(f(n)+\cdots+f(1))}\hat{\rho}(1)^n}{z^n}
G_z^{\varphi_0}(0)\qquad n\geq1,\]
\[G_z^{\varphi_0}(-n)=\frac{z^{n-1}}
{e^{-i2\pi\omega(f(-n+1)+\cdots+f(-1))}\hat{\rho}(1)^{n-1}}
G_z^{\varphi_0}(-1)\qquad n\geq2;\] moreover, for $n=0$ in
(\ref{EquationEsq}) we obtain
$\hat{\rho}(1)G_z^{\varphi_0}(-1)-zG_z^{\varphi_0}(0)=1$, so for
$z=e^{-iE}e^{1/T}$ and $T>1$
\begin{eqnarray*}
1 &\leq& |G_z^{\varphi_0}(-1)|+|z||G_z^{\varphi_0}(0)|\\ &=&
|G_z^{\varphi_0}(-1)|+e^{1/T}|G_z^{\varphi_0}(0)|\\ &\leq&
e(|G_z^{\varphi_0}(-1)|+|G_z^{\varphi_0}(0)|),
\end{eqnarray*} and there exists $d>0$ so that
 \[|G_z^{\varphi_0}(-1)|^2+|G_z^{\varphi_0}(0)|\geq d>0.\]
Therefore, by~(\ref{formulaGeneR}), for $T>1$ one has
\begin{eqnarray*}
L_{\varphi_0}^{p^{2q}}(T) &\ge& \frac{1}{\pi
e^{-\frac{2}{T}}}\frac{1}{T}\sum_{n=1}^{\infty}n^{2q}
\left(\int_0^{2\pi}|G_z^{\varphi_0}(n)|^2dE+\int_0^{2\pi}|G_z^{\varphi_0}(-n)|^2dE\right)\\
&=& \frac{1}{\pi
e^{-\frac{2}{T}}}\frac{1}{T}\sum_{n=1}^{\infty}n^{2q}
\Bigg(\frac{1}{e^{\frac{2n}{T}}}\int_0^{2\pi}|G_z^{\varphi_0}(0)|^2dE\\
& & +
e^{\frac{2(n-1)}{T}}\int_0^{2\pi}|G_z^{\varphi_0}(-1)|^2dE\Bigg)\\
&\geq& \frac{1}{\pi
e^{-\frac{2}{T}}}\frac{1}{T}\sum_{n=1}^{\infty}n^{2q}e^{-\frac{2n}{T}}
\int_0^{2\pi}\left(|G_z^{\varphi_0}(0)|^2+|G_z^{\varphi_0}(-1)|^2\right)dE\\
&\geq& d\frac{2}{T}\sum_{n=0}^{\infty}
(n+1)^{2q}e^{-\frac{2n}{T}},
\end{eqnarray*}
so that, by the discussion at the end of the Appendix,
\[
L_{\varphi_0}^{p^{2q}}(m)\geq C(m+1)^{2q}
\]and  $\left\langle U_V^m \varphi_0,p^{2q}U_V^m \varphi_0\right\rangle $
is unbounded. Hence we have instability.

\

\noindent {\bf Pentadiagonal Case} Suppose now that $V$ is such
that $\hat{\rho}(m-n)=0$ if $|m-n|>2$. Then for each $n\in\Z$
fixed, equation (\ref{rankEquation5}) becomes
\begin{equation}\label{rankEquation9}
e^{-i2\pi\omega
f(n)}\sum_{|n-j|\leq2}\hat{\rho}(n-j)G_z^{\varphi_0}(j)-zG_z^{\varphi_0}(n)=\delta_{n0},
\end{equation} and $\mathcal{F}^{-1}U_V\mathcal{F}$ is pentadiagonal and
has a structure similar to the corresponding operator in the
previous case, just adding the elements whose distance to the
diagonal is $2$. The elements in the new upper diagonal are
$e^{-i2\pi\omega f(n)}\hat{\rho}(-2)$, and the new lower diagonal
are $e^{-i2\pi\omega f(n)}\hat{\rho}(2)$.

For not repeating the tridiagonal case we suppose that either
$\hat{\rho}(2)$ or $\hat{\rho}(-2)$ is different from zero. If $U$
is a pentadiagonal unitary operator in $l^2(\Z)$, that is,
$Ue_k=\zeta_ke_{k-2}+\alpha_ke_{k-1}+\beta_ke_k+\gamma_ke_{k+1}+\theta_ke_{k+2},$
one gets the matrix representation
\[U=\left(\begin{array}{ccccccc} \ddots & & & & & & \\ &
\beta_{k-2} & \alpha_{k-1} & \zeta_k & & &
\\ & \gamma_{k-2} & \beta_{k-1} & \alpha_k & \zeta_{k+1} & & \\ &
\theta_{k-2} & \gamma_{k-1} & \beta_k & \alpha_{k+1} & \zeta_{k+2}
& \\ & & \theta_{k-1} & \gamma_k & \beta_{k+1} & \alpha_{k+2} & \\
 & & & \theta_k & \gamma_{k+1} & \beta_{k+2} & \\ & & & & & & \ddots
\end{array}\right).\] From this we obtain the following relations, for each
$k\in\Z$,
\[|\zeta_k|^2+|\alpha_k|^2+|\beta_k|^2+|\gamma_k|^2+|\theta_k|^2=1\]
\[\overline{\zeta_k}\alpha_{k-1}+ \overline{\alpha_k}\beta_{k-1}
+\overline{\beta_k}\gamma_{k-1}
+\overline{\gamma_k}\theta_{k-1}=0\]
\[\overline{\beta_{k-1}}\theta_{k-1}+ \overline{\alpha_k}\gamma_k
+\overline{\zeta_{k+1}}\beta_{k+1}=0\]
\[\overline{\alpha_{k-1}}\theta_{k-1}+ \overline{\zeta_k}\gamma_k=0\]
\[\overline{\zeta_k}\theta_k=0.\]
Suppose that $\hat{\rho}(2)\neq0$. The case $\hat{\rho}(-2)\neq0$
is similar. Then by the above relations we obtain
$\hat{\rho}(-2)=\hat{\rho}(-1)=\hat{\rho}(0)=\hat{\rho}(1)=0,$
and so (\ref{rankEquation9}) becomes \[e^{-i2\pi\omega
f(n)}\hat{\rho}(2)G_z^{\varphi_0}(n-2)-zG_z^{\varphi_0}(n)=\delta_{n0}.\]
For $n=0$ one gets
$\hat{\rho}(2)G_z^{\varphi_0}(-2)-zG_z^{\varphi_0}(0)=1$ and
analogously to the previous case
\[\vert G_z^{\varphi_0}(-2)\vert^2+\vert
G_z^{\varphi_0}(0)\vert^2\geq d>0,\] with $z=e^{-iE}e^{1/T}$ and
$T>1$. Since for $n\geq1$
\[G_z^{\varphi_0}(2n)=\frac{e^{-i2\pi\omega(f(2n)+f(2n-2)+\cdots+f(2))}\hat{\rho}(2)^n}
{z^n}G_z^{\varphi_0}(0),\]
\[G_z^{\varphi_0}(-2n)=\frac{z^{n-1}G_z^{\varphi_0}(-2)}{\hat{\rho}(2)^{n-1}
e^{-i2\pi\omega(f(-2(n-1))+\cdots+f(-2))}},\] we obtain
\begin{eqnarray*}
L_{\varphi_0}^{p^{2q}}(T) &\geq& \frac{1}{\pi
e^{-\frac{2}{T}}}\frac{1}{T}\sum_{n=1}^{\infty}(2n)^{2q}
\left(\int_0^{2\pi}|G_z^{\varphi_0}(2n)|^2dE+\int_0^{2\pi}|G_z^{\varphi_0}(-2n)|^2dE\right)\\
&=& \frac{1}{\pi
e^{-\frac{2}{T}}}\frac{1}{T}\sum_{n=1}^{\infty}(2n)^{2q}
\Bigg(\frac{1}{e^{\frac{2n}{T}}}\int_0^{2\pi}|G_z^{\varphi_0}(0)|^2dE\\
& & +
e^{\frac{2(n-1)}{T}}\int_0^{2\pi}|G_z^{\varphi_0}(-2)|^2dE\Bigg)\\
&\geq& \frac{1}{\pi
e^{-\frac{2}{T}}}\frac{1}{T}\sum_{n=1}^{\infty}(2n)^{2q}e^{-\frac{2n}{T}}
\int_0^{2\pi}\left(|G_z^{\varphi_0}(0)|^2+|G_z^{\varphi_0}(-2)|^2\right)dE\\
&\geq& d\frac{2}{m}\sum_{n=0}^{\infty}
(2(n+1))^{2q}e^{-\frac{2n}{T}},
\end{eqnarray*} hence \[
L_{\varphi_0}^{p^{2q}}(T)\geq C(2(m+1))^{2q},
\]and  $\left\langle U_V^m \varphi_0,p^{2q}U_V^m \varphi_0\right\rangle $
is unbounded.

\

\noindent {\bf $N$-diagonal Case} If $V$ satisfies
$\hat{\rho}(m-n)=0$ for $|m-n|>N$, we suppose that either
$\hat{\rho}(N)$ or $\hat{\rho}(-N)$ is different from zero. In
case $\hat{\rho}(N)\neq0$, by unitarity and the structure of
$\mathcal{F}^{-1}U_V\mathcal{F}$ we obtain that
$\hat{\rho}(N-1)=\cdots\hat{\rho}(0)=\hat{\rho}(-1)=\cdots=\hat{\rho}(-N)=0$,
thus (\ref{rankEquation5}) becomes, for each $n\in\Z$,
\[e^{-i2\pi\omega
f(n)}\hat{\rho}(N)G_z^{\varphi_0}(n-N)-zG_z^{\varphi_0}(n)=\delta_{n0},\]
and so \[\vert G_z^{\varphi_0}(-N)\vert^2+\vert
G_z^{\varphi_0}(0)\vert^2\geq d>0,\] with $z=e^{-iE}e^{1/T}$,
$T>1$. Moreover, for $n\geq1$
\[G_z^{\varphi_0}(nN)=\frac{e^{-i2\pi\omega(f(nN)+f((n-1)N)+\cdots+f(N))}
\hat{\rho}(N)^n}{z^n}G_z^{\varphi_0}(0)\]
and
\[G_z^{\varphi_0}(-nN)=\frac{z^{n-1}G_z^{\varphi_0}(-N)}{\hat{\rho}(N)^{n-1}
e^{-i2\pi\omega(f(-N(n-1))+\cdots+f(-N))}}.\] Similarly to the
previous cases we conclude that
\[L_{\varphi_0}^{p^{2q}}(T)\geq d\frac{2}{T}\sum_{n=0}^{\infty}
(N(n+1))^{2q}e^{-\frac{2n}{T}}.\]

Therefore we can stated the following result:

\

\begin{Theorem}\label{ConclusionTheorem} For Kicked systems in $\LL^2(S^1)$
with \[U_V=e^{-i2\pi\omega f(p)}e^{-iV(x)}\] as in (\ref{eqfpv}),
we obtain that $\mathcal{F}U_V\mathcal{F}^{-1}:l^2(\Z)\rightarrow
l^2(\Z)$ is represented by the matrix $B$ with elements
$B(m,n)=e^{-i2\pi\omega f(n)}\hat{\rho}(m-n)$, where
$\rho(x)=(2\pi)^{-\frac{1}{2}}e^{-iV(x)}$. If $V$ satisfies
$\hat{\rho}(m-n)=0$ for $|m-n|>N\in\N^*$ and either
$\hat{\rho}(N)$ or $\hat{\rho}(-N)$ is different from zero, then
$V(x)=\pm Nx+\theta$, for some $\theta\in\R$, and
$\mathcal{F}U_V\mathcal{F}^{-1}$ is unitarily equivalent to $T^N$
(the $N$th power of $T$) where $T$ is the bilateral shift.
Furthermore,
\[L_{\varphi_0}^{p^{2q}}(T)\geq
d\frac{2}{T}\sum_{n=0}^{\infty}(N(n+1))^{2q}e^{-\frac{2n}{T}}.\]
\end{Theorem}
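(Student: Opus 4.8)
The plan is to assemble the statement from four pieces, only one of which requires real work. \emph{First}, the matrix form: since $p\varphi_n=n\varphi_n$, the factor $e^{-i2\pi\omega f(p)}$ is diagonal in $\{\varphi_n\}$ with entries $e^{-i2\pi\omega f(n)}$, while multiplication by $e^{-iV(x)}=\sqrt{2\pi}\,\rho(x)$ becomes convolution by $\hat\rho$; composing gives $\mathcal{F}U_V\mathcal{F}^{-1}=B$ with $B(m,n)=e^{-i2\pi\omega f(n)}\hat\rho(m-n)$ and $U_V=\mathcal{F}^{-1}B\mathcal{F}$, which is exactly the identity~(\ref{rankEquation4}) derived before the statement, so this part needs no new work.

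\emph{Second, and this is the heart of the argument}, I would extract the rigidity $V(x)=\pm Nx+\theta$ from the band hypothesis together with unitarity. The operator $W:=\mathcal{F}e^{-iV(x)}\mathcal{F}^{-1}$ is unitary (equivalently $|e^{-iV(x)}|\equiv1$), so $\sum_k\hat\rho(k)\overline{\hat\rho(k-m)}=\delta_{m0}$ for every $m\in\Z$. Assume $\hat\rho$ is supported in $\{-N,\dots,N\}$ and, say, $\hat\rho(N)\neq0$ (the case $\hat\rho(-N)\neq0$ is symmetric). Taking $m=2N$ leaves only the term $\hat\rho(N)\overline{\hat\rho(-N)}$, so $\hat\rho(-N)=0$; and if $k_0:=\min\{k:\hat\rho(k)\neq0\}$ were $<N$, then $m=N-k_0>0$ would leave only $\hat\rho(N)\overline{\hat\rho(k_0)}$, forcing $\hat\rho(k_0)=0$, a contradiction. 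Hence $\hat\rho$ is supported at the single point $N$ with $|\hat\rho(N)|=1$, i.e. $e^{-iV(x)}=\hat\rho(N)e^{iNx}$ and $V(x)=-Nx+\theta$ for some $\theta\in\R$ (respectively $V(x)=+Nx+\theta$ in the symmetric case). Equivalently, one may run the entrywise unitarity relations for banded matrices as in the proof of Lemma~3.1 of~\cite{BHJ}, exactly as in the tridiagonal and pentadiagonal cases above; the autocorrelation computation just handles all $N$ at once.

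\emph{Third}, with $\hat\rho$ supported only at $N$, the first step gives $B\varphi_n=c_n\varphi_{n+N}$ with $|c_n|=1$: a bilateral weighted shift by $N$ with unimodular weights. Conjugating by the diagonal unitary $\Lambda\varphi_n=\lambda_n\varphi_n$ with $|\lambda_n|=1$ and $\lambda_{n+N}=\lambda_nc_n$ (solvable freely on each residue class mod $N$) gives $\Lambda^{*}B\Lambda\,\varphi_n=\varphi_{n+N}$, so $\mathcal{F}U_V\mathcal{F}^{-1}$ is unitarily equivalent to $T^N$ (to $(T^{*})^N$, itself unitarily equivalent to $T^N$ via $\varphi_n\mapsto\varphi_{-n}$, in the symmetric case); in particular $\omega$ and $f$ are absorbed into $\Lambda$ and play no role. \emph{Fourth}, the lower bound is the $N$-diagonal computation already carried out before the statement: inserting $\hat\rho(m-n)=\hat\rho(N)\delta_{m-n,N}$ into~(\ref{rankEquation5}) gives $e^{-i2\pi\omega f(n)}\hat\rho(N)G_z^{\varphi_0}(n-N)-zG_z^{\varphi_0}(n)=\delta_{n0}$, whose solution expresses $G_z^{\varphi_0}(\pm nN)$ in terms of $G_z^{\varphi_0}(0)$ and $G_z^{\varphi_0}(-N)$ with the moduli carrying factors $e^{-n/T}$ and $e^{(n-1)/T}$ (as $|z|=e^{1/T}$); the $n=0$ equation together with $T>1$ forces $|G_z^{\varphi_0}(-N)|^2+|G_z^{\varphi_0}(0)|^2\geq d>0$ pointwise in $E$, and feeding this into the general inequality~(\ref{formulaGeneR}) over the indices $j=\pm nN$, $n\ge1$, yields
\[L_{\varphi_0}^{p^{2q}}(T)\ \geq\ d\,\frac{2}{T}\sum_{n=0}^{\infty}\bigl(N(n+1)\bigr)^{2q}e^{-\frac{2n}{T}},\]
as claimed.

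The only genuine obstacle is the second step: turning ``band width $N$ plus unitarity plus a nonzero extreme coefficient'' into ``$e^{-iV}$ is a pure monomial, hence $V$ is linear''. Everything else — the change of basis in the first step, the gauge reduction of a unimodular weighted shift in the third, and the final estimate — is routine, and the estimate is in fact already spelled out verbatim in the text preceding the theorem.
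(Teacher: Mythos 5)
Your proposal is correct, and its overall architecture (matrix representation, rigidity of $\hat\rho$, gauge reduction to $T^N$, then the Green-function estimate) matches the paper's. The one place where you genuinely diverge is the step you yourself flag as the heart of the matter: forcing $\hat\rho$ to be supported at the single point $N$ with $|\hat\rho(N)|=1$. The paper gets this from the entrywise unitarity relations of banded unitary matrices in the spirit of Lemma~3.1 of \cite{BHJ}; it writes these relations out explicitly only for the tridiagonal and pentadiagonal cases and, for general $N$, simply asserts ``by unitarity and the structure of $\mathcal{F}^{-1}U_V\mathcal{F}$ we obtain that $\hat{\rho}(N-1)=\cdots=\hat{\rho}(-N)=0$.'' You instead use the autocorrelation identity $\sum_k\hat\rho(k)\overline{\hat\rho(k-m)}=\delta_{m0}$, which expresses unitarity of the convolution factor alone (equivalently $|e^{-iV}|\equiv1$), and kill the extreme and minimal coefficients by choosing $m=2N$ and $m=N-k_0$. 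Since the phases $e^{-i2\pi\omega f(n)}$ drop out of these relations, your computation handles every $N$ uniformly and actually supplies the detail the paper leaves implicit; that is a small but real gain in rigor. The remaining steps --- the diagonal gauge $\Lambda$ solving $\lambda_{n+N}=\lambda_n c_n$ on residue classes mod $N$ (the paper's $\vartheta_{n+N}-\vartheta_n=2\pi\omega f(n)+\theta$), the pointwise lower bound $|G_z^{\varphi_0}(-N)|^2+|G_z^{\varphi_0}(0)|^2\geq d$ from the $n=0$ equation, and the insertion into (\ref{formulaGeneR}) --- coincide with the paper's treatment.
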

\begin{proof} It is enough to prove that $\mathcal{F}U_V\mathcal{F}^{-1}$
is unitarily equivalent to $T^N$. Suppose that
$\hat{\rho}(N)\neq0$ (the case for $\hat{\rho}(-N)\neq0$ is
similar); then by the above discussion we obtain
\[B(m,n)=\left\{\begin{array}{ccc} 0 & \mbox{if} & m\neq n+N \\
e^{-i2\pi\omega f(n)}\hat{\rho}(N) & \mbox{if} & m=n+N
\end{array}\right.,\] that is, $Be_n=e^{-i2\pi\omega
f(n)}\hat{\rho}(N)e_{n+N}$ where $\{e_n\}$ is the canonical basis
of $l^2(\Z)$. Since $|\hat{\rho}(N)|=1$, write
$\hat{\rho}(N)=e^{-i\theta}$. Let $W$ be the unitary operator
defined by
\[We_n=e^{i\vartheta_n}e_n,\qquad n\in\Z,
\] where $\vartheta_n$ are elements in $[0,2\pi)$. If $\vartheta_n$
satisfies for all $n\in\Z$
\begin{equation}\label{rankEquation8}
\vartheta_{n+N}-\vartheta_n=2\pi\omega f(n)+\theta,
\end{equation} it follows that $W^{-1}BW=T^N$. (\ref{rankEquation8}) is
satisfied taking, for example,
$\vartheta_0=\vartheta_1=\cdots=\vartheta_{N-1}=0$ and the another
$\vartheta_n$ obeying (\ref{rankEquation8}).
\end{proof}

Although Theorem~\ref{ConclusionTheorem}  gives a nice
illustration of the potential applications of our expression for
the Laplace average, since it is one of the few instances that
such average can be explicitly estimated from below, again it can
be derived by more direct methods  and one can also conclude
\cite{Bell} that the spectrum of the corresponding Floquet operators are
absolutely continuous.

\

\section{Conclusions}
\label{ConclusionSection} Although most of our applications of
Theorem~\ref{FormulaTheorem} give expected results (sometimes
known results that can be derived in simpler ways), we believe
that that formula is  interesting and has a potential to be
applied to more sophisticated models as the Fermi accelerator. The
difficulty is to get expressions or estimates for the Green
functions, since calculating the resolvent of an operator is not
always an easy task; sometimes we have the expressions for resolvent operators (e.g., for kicked systems) but the resulting integrals can be too
involved. We have not tried any numerical approach to
formula~(\ref{formulaDoTeor}), which might be useful for some
specific models.

In the case of one-dimensional discrete Schr\"o\-dinger operators,
where the hamiltonian is  $H_V:l^2(\Z)\rightarrow l^2(\Z)$ defined
by
\[
(H_V\xi)(n)=\xi(n+1)+\xi(n-1)+V(n)\xi(n),
\] $V$
a bounded sequence, a similar formula can be handled in some cases
by relating the resolvent $R_{E+\frac{i}{T}}(H_V)$ to
 transfer matrices. Then
adequate upper bounds of such transfer matrices, on some set of
energies $E$,  result in lower estimates for the corresponding
Green functions and then transport properties are obtained  for
interesting models (see~\cite{DST} and references therein).

In~\cite{BHJ}  a class of Floquet operators displaying a
pentadiagonal structure was  introduced; for these models there is
a transfer matrix formalism. However, such transfer matrices  are
too complicated and  analytical estimates seem far from trivial.

Anyway, the technique here is quite general, it asks no particular
regularity of the time-dependence and  can be virtually applied to any time-periodic system as soon as the time evolution is well posed. As
already said, the chief difficulty is related to suitable bounds of
matrix elements of the resolvents of unitary (Floquet) operators,
a task harder than we initially envisaged. Herein we
put forward for consideration the challenge of getting additional
applications for the formula (\ref{formulaDoTeor}) deduced for the Laplace averages, including an application of Theorem~\ref{ApplTheorem} to physical models. It
is also worth mentioning the question left open in
Lemma~\ref{lema.49}, that is, is it true that $\beta^{-}_e=
\beta^{-}_d$?

\

\section*{Appendix: Laplace Transform of Sequences}
\label{AppendixSection} Let $a=(a_n)_{n\in\N}$ be a sequence of
positive real numbers. The Laplace transform of $a$, denoted by
$f_a$, is the function defined by
\begin{equation}\label{SeriesEquation}
f_a(s)=\sum_{n=0}^{\infty}e^{-sn}a(n),
\end{equation} for $s$ in a subset of $\R$.
It will also be denoted by $f_a(s)=\mathcal{L}(a)$.

We say the Laplace transform of $a=(a_n)$  exists if the series in
(\ref{SeriesEquation}) converges for some $s$. For example, if
$a(n)=e^{n^2}$, then the sum in~(\ref{SeriesEquation}) diverges
for all $s\in\R$.

\

\noindent {\bf Examples.}
\begin{enumerate}
\item For the constant sequence $a(n)=1$ it
follows that
\[f_a(s)=\sum_{n=0}^{\infty}e^{-sn}=\frac{1}{(1-e^{-s})},\] for
$s>0$. By using Taylor expansion,  for small $s$ one finds that
$\frac{1}{(1-e^{-s})}\approx\frac{1}{s},$ .
\item Since
$\sum_{n=0}^{\infty}z^n=\frac{1}{1-z}$, for $z\in\C,|z|<1$, it
follows that
\[\sum_{n=0}^{\infty}(n+k)(n+k-1)\cdots(n+1)z^n=\frac{k!}{(1-z)^{k+1}},\]
for $k=1,2,3,\cdots$, and $z$ as above. Thus,  the Laplace
transform of $a^k(n)=(n+k)(n+k-1)\cdots(n+1)$ is
\[f_a^k(s)=\sum_{n=0}^{\infty}e^{-sn}a^k(n)=\frac{k!}{(1-e^{-s})^{k+1}},\qquad
s>0.\]
 For small $s$,
$f_a^k(s)\approx\frac{k!}{s^{k+1}}$.
\end{enumerate}
\

A sequence of complex numbers $a=(a_n)$ is said to be exponential
of order $\sigma_0$ (real) if there exists $M>0$ so that
$|a(n)|\leq Me^{\sigma_0n}$, $\forall n$. That is, $a(n)$ does not
increase faster than $e^{\sigma_0n}$ as $n\rightarrow\infty$. If
$a=(a_n)$ is exponential of order $\sigma_0>0,$ then
\[f_a(s)=\sum_{n=0}^{\infty}e^{-sn}a(n)\] is convergent for any
$s>\sigma_0$.

Let $\mathcal{V}$ denote the set of positive sequences of
exponential order $\sigma_0$. The Laplace transform $\mathcal{L}$
satisfies
\[\mathcal{L}(ca)=c\mathcal{L}(a),\qquad\mathcal{L}(a+b)=\mathcal{L}(a)+\mathcal{L}(b),\]
where $c$ is a positive number and $a$ and $b$ are sequences in
$\mathcal{V}$. Moreover, if $a\in \mathcal{V}$ and
$\mathcal{L}(a)=0,$ then $\sum_{n=0}^{\infty}e^{-sn}a(n)=0$ and so
$a(n)=0$ for all $n$, that is, $a=0$. Thus $\mathcal{L}$ is
injective on $\mathcal{V}$.

The Laplace average (\ref{LaplaceAverage}) is related to the
Laplace transform of $E_{\xi}^A(n)$ by
\[L_{\xi}^A(T)=\frac{2}{T}\sum_{n=0}^{\infty}e^{-\frac{2n}{T}}E_{\xi}^A(n)=\frac{2}{T}f_{E_{\xi}^A}\left(\frac{2}{T}\right).\]

If $a(n)=1$ for all $n$, then
\[\frac{2}{T}f_{a}\left(\frac{2}{T}\right)=\frac{2}{T}\frac{1}{(1-e^{-2/T})}\approx\frac{2}{T}\frac{1}{2/T}=1,\]
for $T$ large enough. If $a(n)=(n+k)(n+k-1)\cdots(n+1)\approx
n^k,$ then
\[\frac{2}{T}f_{a}\left(\frac{2}{T}\right)=\frac{2}{T}\frac{k!}{(1-e^{-2/T})^{k+1}}\approx\frac{2}{T}\frac{k!}{(2/T)^{k+1}}=k!\left(\frac{T}{2}\right)^k,\]
for large $T$. Hence, if $E_\xi^A(T)$ grows like $T^k$ then the
same law holds for its average Laplace transform. We have a
restricted converse, that is, if $L^A_\xi(n)$ grows with a
positive power of $n$ then, by Lemma~\ref{lema.49}, its Ces\`aro
average is unbounded (with a rather similar behavior at large
times) and so is $E_\xi^A(n)$. These properties are repeated used
in the text.

One should be aware that there are special situations of unbounded positive sequences
$a(n)$ with bounded average Laplace transforms (so that
$\beta_e^+=\beta_d^+=0$); an explicit example is $a(n^2)=n$ and $a(n)=0$ for $n\notin \{k^2: k\in\N\}$. The same phenomenon is well known for
Ces\`aro averages and, by Lemma~\ref{lema.49}, such phenomena are  connected.


\end{document}